\theoremstyle{plain}
\newtheorem{theorem}{Theorem}
\newtheorem{assumption}{Assumption}
\theoremstyle{plain}
\newtheorem{remark}{Remark}[section]
\newcommand{\be}{\begin{equation}}
\newcommand{\ee}{\end{equation}}
\newcommand{\bea}{\begin{eqnarray}}
\newcommand{\eea}{\end{eqnarray}}
\newcommand{\bml}{\begin{subequations}}
\newcommand{\eml}{\end{subequations}}
\newcommand{\bbm}{\begin{bmatrix}}
\newcommand{\ebm}{\end{bmatrix}}
\newcommand{\sech}{\mathrm{sech}}
\newcommand{\arcsinh}{\mathrm{arcsinh}}
\begin{document}

\title{Finite-time gradient blow-up and shock formation in Israel-Stewart theory: Bulk, shear, and diffusion regimes}

\today

\author{F\'abio S. Bemfica}
\affiliation{Department of Mathematics, Vanderbilt University, Nashville, TN, 37211, USA}
\affiliation{Escola de Ci\^encias e Tecnologia, Universidade Federal do Rio Grande do Norte, RN, 59072-970, Natal, Brazil}
\email{fabio.bemfica@ufrn.br}

\begin{abstract}
We present the first demonstration of finite-time gradient blow-ups in Israel-Stewart (IS) theories with 1+1D plane symmetry, mathematically showing the existence of smooth initial data that can evolve into shocks across three regimes: pure bulk viscosity, shear viscosity, and diffusion. Through numerical simulations of bulk-viscous fluids, we verify that these shocks satisfy Rankine-Hugoniot conditions, exhibit characteristic velocity crossing (Mach number obeys $\mathcal{M}_u > 1 > \mathcal{M}_d$), and maintain thermodynamic consistency, required for physical shocks. Our results reveal a crucial early-time dynamical phase --- previously unexplored in steady-state analyses --- where nonlinear effects dominate viscous damping, resolving the apparent impossibility of IS-type theories predicting shock formation. While restricted to simplified 1+1D systems with separate viscous effects, this work establishes foundational insights for shock formation in relativistic viscous hydrodynamics, highlighting critical challenges for extending to 3+1D systems or to a full IS theory where multiple nonlinear modes interact. The findings emphasize that both initial data structure and numerical methodology require careful consideration when studying shocks in relativistic viscous fluids.

\end{abstract}


\keywords{Relativistic viscous fluid dynamics, shocks, causality, heavy-ion collisions, Israel-Stewart theory.}

\maketitle

\section{Introduction}

Shock formation is a natural phenomenon in fluids\footnote{Here, the term \textit{fluid} encompasses liquids, gases, plasmas, or any medium that exhibits fluid-like behavior.} and is typically accompanied by an abrupt change in the fluid's physical variables such as density, temperature, and pressure. The first mathematical description of shock formation emerged from the analysis of solutions to the non-relativistic Euler equations for a compressible isothermal fluid in $1+1$ dimensions~\cite{Challis}. This work demonstrated the existence of a crossing point in the characteristic curves, where distinct smooth solutions of the equations coincide, leading to discontinuities in the fluid's propagation velocity and pressure. Since then, non-relativistic shock formation has been extensively studied and experimentally observed.

For high-energy phenomena, the relativistic Euler equations have served as the foundational framework for describing relativistic fluids. While these equations are limited to inviscid (zero-viscosity) fluids, they have been successfully applied to model relativistic fluid dynamics in diverse fields, including nuclear physics, astrophysics, and cosmology. In such contexts, shock formation is not only possible but often expected, particularly in extreme astrophysical events like supernovae~\cite{Koyama:1995rr}. These shocks arise during the collapse of massive stars and play a critical role in nucleosynthesis, influencing the production of heavy elements~\cite{Mekjian:1978zz}. Additionally, relativistic shocks are key drivers of astrophysical jets --- collimated outflows observed in systems such as active galactic nuclei and microquasars~\cite{Marti2019,Blandford2019}.

The study of shock formation in the relativistic Euler equations has been extended to $2+1$ and $3+1$ dimensions, with significant contributions from works such as Refs.~\cite{Alinhac1999,Alinhac2001a,Alinhac2001b,Christodoulou2007}\footnote{See also \cite{Speck} and the review \cite{Disconzi:2023rtt}}. These analyses provide deeper insights into the mathematical structure and physical implications of shocks in relativistic fluids, paving the way for further exploration in both theoretical and applied contexts.

Despite the success of the relativistic Euler equations in describing a wide range of physical phenomena, the inclusion of viscous corrections is often necessary. Relativistic viscous fluids play a crucial role in applications such as the quark-gluon plasma produced in experiments at the Relativistic Heavy Ion Collider and the Large Hadron Collider. Theoretical modeling must incorporate viscous corrections to accurately describe the experimental data from these collisions~\cite{Romatschke:2017ejr,Heinz:2013th}. Furthermore, recent numerical simulations~\cite{Alford:2017rxf,Shibata:2017xht,Shibata:2017jyf,Most:2021zvc,Most:2022yhe} of neutron star mergers using relativistic viscous hydrodynamic models strongly suggest that the resulting gravitational wave signals are influenced by viscous effects. However, shock formation in the viscous theories typically used to study these phenomena remains poorly understood.

Leading-edge simulations of heavy-ion collisions rely on Israel-Stewart (IS) theories of relativistic viscous fluids, first developed by Israel and Stewart~\cite{Israel:1976tn,Israel:1976efz,Israel1979}, with subsequent refinements in~\cite{Baier:2007ix,Denicol:2012cn}. The appeal of IS-type theories stems from their status as the first relativistic viscous fluid theories proven to be linearly stable and causal~\cite{Hiscock:1983zz,Olson:1990rzl}, in contrast to the acausal and unstable relativistic Navier-Stokes theories~\cite{Hiscock:1985zz}. Recent advances have further demonstrated the robustness of these theories: for a conformal fluid without heat flow described by the IS-type theory of~\cite{Denicol:2012cn}, it was shown in~\cite{Bemfica:2020xym} that the theory remains nonlinearly causal under certain field constraints. Additionally, in~\cite{Bemfica:2019cop}, an IS-type theory with only bulk viscosity was proven to be nonlinearly causal, with the associated partial differential equations forming a first-order symmetric hyperbolic system, whether or not the fluid is coupled to a dynamical gravitational field.

As of now, shock formation in IS-type theories remains poorly understood. The most comprehensive study to date is the work by Olson and Hiscock~\cite{OlsonHiscock1990} in $1+1$ dimensions, which analyzed late-time shock formation in the dissipation-dominated regime, where viscous and thermal effects (e.g., terms like $\eta\partial_x^2 u$) dominate over nonlinearities (e.g., $u\partial_x u$). Their analysis focused on asymptotic steady-state solutions, where dissipative mechanisms preserve the smoothness of the shock profile, precluding blow-up and yielding only finite-width shocks. A related study by Geroch and Lindblom~\cite{GEROCH1991394} also examined steady-state solutions in relativistic dissipative fluids in $1+1$ dimensions, demonstrating that causality and stability constraints impose strict bounds on shock structure in IS theories. However, neither work addressed the early-time regime, where nonlinearities may dominate and potentially lead to zero-width shocks (see Sec.~\ref{Sec:Discussion} for a detailed discussion of these regimes).  

A simplified IS-type theory with only bulk viscosity was studied in $3+1$ dimensions~\cite{Disconzi:2020ijk}. This work demonstrated that, for a class of smooth initial data, solutions must either develop singularities or become acausal within finite time. While this result establishes the inevitability of breakdown in certain regimes, the precise nature of the singularity (e.g., gradient blow-up, loss of causality, or shock formation) remains an open question. The contrast between these findings --- smooth late-time shocks versus potential early-time singularities --- highlights the need for a unified framework to describe shock formation across all dynamical regimes in viscous relativistic fluids.   

Recently, B\"arlin \cite{BARLIN2023103901} established that under certain assumptions --- notably the existence of genuinely nonlinear modes around an equilibrium reference state --- there exist smooth initial data that evolve into finite-time gradient blow-ups while keeping the dynamical variables bounded. This result is particularly significant because it applies to systems of first-order quasilinear partial differential equations in $1+1$ dimensions that include source terms, making it highly relevant for studying shock formation in IS-type theories. While the $1+1$ dimensional case is simpler than its $3+1$ dimensional counterpart, it provides crucial insights into the more complex higher-dimensional shock problem. In this work, we will leverage B\"arlin's results to demonstrate finite-time gradient blow-ups and shock formation in viscous relativistic fluids --- in particular IS-type theories.

These findings have important implications for relativistic heavy-ion collisions: smooth late-time shocks (as studied in \cite{OlsonHiscock1990}) may manifest in the collective flow observables of quark-gluon plasma, whereas finite-time singularities could relate to transient signals\footnote{Here, \textit{transient signals} refer to short-lived, non-equilibrium phenomena such as early-stage pressure anisotropies or rapid decorrelation patterns that cannot be described by hydrodynamics after thermalization. Such signals may carry information about the initial pre-equilibrium dynamics.} in the pre-equilibrium phase. The interplay between these regimes --- smooth dissipation-dominated shocks versus singularity-driven breakdowns --- could help explain the full evolution of the system from initial collision to final-state particles.

The paper is organized as follows. In the next section, we clarify why our results do not contradict the findings of Olson-Hiscock~\cite{OlsonHiscock1990} and Geroch-Lindblom~\cite{GEROCH1991394} on IS-type theories, but rather complement their analyses by focusing on the early-time regime where nonlinear modes dominate over viscous effects. Section~\ref{Math_framwork} establishes the mathematical framework to which B\"arlin's results~\cite{BARLIN2023103901} apply. We then derive conditions for the existence of smooth initial data that lead to finite-time gradient blow-ups in three distinct cases: for the $1+1$ IS equations with only bulk viscosity (Section~\ref{Bulk}), only shear viscosity (Section~\ref{Sec:Shear}), and only diffusion (Section~\ref{Sec:Diff}).

To analyze shock formation, Section~\ref{Numerical_Bulk} presents numerical solutions of the IS equations for the pure bulk viscosity case under two different initial conditions. The first setup generates a pair of shock waves, while the second produces a rarefaction wave coupled with a shock wave. In both scenarios, we examine the Rankine-Hugoniot (RH) conditions and the Mach number in the shock frame. Our results demonstrate that the RH conditions are satisfied within acceptable numerical precision, and we observe a transition from supersonic to subsonic Mach numbers --- a hallmark of irreversible processes and physical shocks. 

Conclusions and final remarks are provided in Section~\ref{Sec:Conclusion}. Appendix~\ref{Numerical_Scheme} details the numerical schemes employed, and Appendix~\ref{Bjorken} validates the numerical code using the semi-analytical Bjorken flow solution as a benchmark. 

Throughout this paper, we work in natural units where $c=\hbar=k_B=1$. We assume a flat spacetime with the Minkowski metric $g_{\mu\nu} = \text{diag}(-1, 1, 1, 1)$. By $1+1$ dimensions, we refer to a $3+1$ dimensional spacetime theory with $1+1$ dimensional symmetry, where all dynamical variables depend only on the coordinates $t$ (time) and $x$ (space). The fluid's flow velocity is given by 
\[u^\mu = (u^0, u, 0, 0) = \gamma(1, v, 0, 0),\]
where $\gamma = 1/\sqrt{1 - v^2}$ is the Lorentz factor and $v$ is the fluid's three-velocity. The projection tensor $\Delta^{\mu\nu} = u^\mu u^\nu + g^{\mu\nu}$ then has the following non-vanishing components: $\Delta^{00} = u^2 = \gamma^2 v^2,\, \Delta^{11} = u_0^2 = \gamma^2$, $\Delta^{10} = u^0 u = \gamma^2 v$, and $\Delta^{22}=\Delta^{33}=1$.

\section{Early-Time Nonlinear Dominance vs. Late-Time Viscous Dominance}
\label{Sec:Discussion}

To distinguish our work from the well-known studies of Olson-Hiscock \cite{OlsonHiscock1990} and Geroch-Lindblom \cite{GEROCH1991394} on IS theory, we clarify key conceptual differences.
\begin{enumerate}
\item \textbf{Late-Time Regime (Olson-Hiscock/Geroch-Lindblom)}: These works focus on \textit{steady-state solutions} of the IS equations, where dissipation dominates and time derivatives are negligible. The system reduces to
\[A^1 \partial_x \psi = B,\]
where $A^1\in\mathbb{R}^{n\times n}$ depends on the $n$ fluid variables, $\psi\in\mathbb{R}^{n\times 1}$ represents the fluid variables (e.g., shear stress, heat flow), and $B\in\mathbb{R}^{n\times 1}$ encodes damping terms like $-\pi^{\mu\nu}/\tau_\pi$. Crucially, steady-state solutions exclude strong shocks because the characteristic condition $\det[A^1] = 0$ at $v = v_{\text{char}}$ prevents admissible velocity jumps ($v_- > v_{\text{char}}$ to $v_+ < v_{\text{char}}$). This restricts solutions to small perturbations.

\item \textbf{Early-Time Regime (This Work)}: We study the \textit{full time-dependent system} $A^\alpha \partial_\alpha \psi=A^0 \partial_t \psi+A^1 \partial_x \psi = B$, where nonlinearities dominate. Causality requires the characteristic roots $\xi_0=\xi_0(\xi_1)$ of $\det[A^\alpha \xi_\alpha] = 0$ to define a real non-timelike co-vector $\xi_\mu$. For $1+1$-dimensional flow $u^\mu = \gamma(1, v, 0, 0)$, the characteristic determinant typically factors into polynomials like
\[(u^\alpha \xi_\alpha)^2 - v_{\text{char}}^2 \Delta^{\alpha\beta} \xi_\alpha \xi_\beta,\]
where $v_{\text{char}} \in [0,1]$ is a characteristic velocity in the fluid's rest frame. The full time-dependent system has an invertible matrix $A^0$ since the timelike co-vector $\xi_\mu=(1,0,0,0)$, which, from causality, cannot be one of the characteristics, yields $\det[A^0]\ne 0$. In the steady-state limit, $\det[A^1]$ corresponds to $\xi_\mu = (0,1,0,0)$ in the characteristic determinant, giving $\gamma^2(v^2 - v_{\text{char}}^2)=0$ when $v = v_{\text{char}}$, which explains why $\det[A^1] = 0$ occurs in this case and multi-valued solutions happen when $v$ crosses $v_{\text{char}}$.
\end{enumerate}

Therefore, the full analysis of shock formation requires studying the dynamical regime, which has not been fully addressed. Notably, the upstream/downstream velocity jump does not compromise the dynamical solution when causality holds, since $\det[A^0]\ne 0$. To illustrate the interplay between gradient blow-ups and shock formation, consider Burgers' equation with dissipation \cite{landau1987fluid,farlow1993partial}
\[\partial_t v + v\partial_x v = \beta\partial_x^2 v.\]
In the initial phase where nonlinearity dominates ($v\partial_x v \gg \beta\partial_x^2 v$), the gradient grows as $|\partial_x v|\sim (t-t_c)^{-1}$, with $t_c = -1/\partial_x v(0,x)_{\text{min}}$. This requires initial fluid compression ($\partial_x v(0,x)_{\text{min}}<0$) to ensure $t_c>0$. When dissipation dominates ($\partial_t v \approx \beta\partial_x^2 v$), gradients diffuse rather than grow. In the intermediate regime where both effects compete, the steady-state balance $v\partial_x v = \beta\partial_x^2 v$ yields tanh-like solutions with maximum gradient scaling as $|\partial_x v|_{\text{max}} \sim (\Delta v)^2/\beta$, where $\Delta v$ is the upstream-downstream velocity difference. This describes finite-width, smooth profiles --- i.e., weak shocks. This analysis clarifies that Olson-Hiscock studied late-time shocks where dissipation balances nonlinearity, precluding strong shocks due to causality constraints. For steep shock formation (gradient blow-ups), the condition $|v\partial_x v| \gg |\beta\partial_x^2 v|$ must hold, equivalent to $|\partial_x v(x,0)|^{-1} \gg L^2/\beta$ for some length scale $L$, ensuring $t_c \ll t_{\text{diss}}$ (where $t_{\text{diss}}$ is the dissipation timescale).

Now, let us consider the Israel-Stewart equation for the shear tensor
\[
\Delta^{\mu\nu}_{\alpha\beta}u^\lambda \partial_\lambda \pi^{\alpha\beta} + \frac{2\eta}{\tau_\pi}\sigma^{\mu\nu} = -\frac{\pi^{\mu\nu}}{\tau_\pi},
\]
where $\Delta^{\mu\nu}_{\alpha\beta} = \frac{1}{2}(\Delta^{\mu}_\alpha \Delta^\nu_\beta + \Delta^\mu_\beta\Delta^\nu_\alpha) - \frac{1}{3}\Delta^{\mu\nu}\Delta_{\alpha\beta}$,
$\sigma^{\mu\nu} = \Delta^{\mu\nu}_{\alpha\beta}\partial^{\alpha}u^{\beta}$, $\eta$ is the shear viscosity, and $\tau_\pi$ is the relaxation time. The term $-\pi^{\mu\nu}/\tau_\pi$ plays a crucial role as a relaxation term. Without it, the equation for $\pi^{\mu\nu}$ would not relax to equilibrium regardless of the value of $\eta$. However, when $\pi^{\mu\nu}/\tau_\pi$ is non-zero but sufficiently small, nonlinear modes may dominate, potentially leading to gradient blow-ups. This aligns with one of the key assumptions in \cite{BARLIN2023103901} for gradient blow-ups in strictly hyperbolic systems with source terms. Conversely, when $\pi^{\mu\nu}/\tau_\pi$ is not small at $t=0$ shocks may not be expected and the shear stress tensor may relax to its asymptotic solution
\[\pi^{\mu\nu} \approx -2\eta \sigma^{\mu\nu}.\]
In this case, the energy-momentum conservation equation $\partial_\nu T^{1\nu} = 0$ exhibits a competition between the nonlinear convection term $u\partial_x u$ from $u^\alpha\partial_\alpha u$ and the dissipative term $\partial_x^2 u$ from $\partial_\nu \pi^{1\nu} \approx -2\eta\partial_x^2 u$. This is precisely the regime where the dissipative Burgers equation analysis applies. In particular, the results of Olson-Hiscock/Geroch-Lindblom are based on this limiting case where nonlinear modes and dissipation balance each other.

\subsection*{Steady-State vs. Dynamic Shocks}

The key conclusions from the above discussion can be summarized as follows:
\begin{itemize}
    \item \textbf{Steady-state solutions} represent a special case where the shock profile becomes time-independent. In this regime, the requirement for the fluid velocity $v$ to cross the characteristic velocity $v_{\text{char}}$ arises from the ODE structure of the spatial equations. When $v = v_{\text{char}}$, the system becomes singular ($\det[A^1] = 0$), leading to multi-valued solutions. Consequently, this fundamentally limits the allowed velocity jumps and shock strengths.
    \item \textbf{Dynamic (time-dependent) shocks} are governed by the full partial differential equation (PDE) system, where the breakdown mechanism involves both spatial and temporal derivatives. The system may remain mathematically well-posed (e.g., in Sobolev spaces) while still developing finite-time gradient catastrophes. Nonlinear effects (wave steepening) can dominate over dissipation, enabling true shock formation. This occurs even when steady-state analysis would predict a breakdown of solutions for supersonic/subsonic crossing.
\end{itemize}
The works \cite{OlsonHiscock1990,GEROCH1991394} establish that when the system enters the viscosity-dominated phase, Israel-Stewart dissipation guarantees the late-time solution (post any potential blow-up) will become smooth. This provides an important connection between dynamic shock formation and eventual relaxation to steady-state profiles. Hence, steady-state analyses describe the final equilibrium state while dynamic analyses capture the full shock formation process. The two regimes are connected through viscous relaxation.

\section{Mathematical framework}
\label{Math_framwork}

In this paper we investigate strictly hyperbolic systems of first-order quasi-linear partial differential equations in $1+1$ dimensions of the form
\begin{equation}
\label{EOM-Matrix}
A^\alpha(\psi)\partial_\alpha \psi = A^0(\psi)\partial_t\psi + A^1(\psi)\partial_x\psi = B(\psi),
\end{equation}
where $\psi \in \Psi \subset \mathbb{R}^{N\times 1}$ is the vector containing the $N=1,2,\cdots$ unknown variables $\psi_a = \psi_1,\dots,\psi_N$, which are functions of the spacetime coordinates $(t,x)$. In particular, $\Psi$ is the space of solutions. The source term $B(\psi) \in \mathbb{R}^{N\times 1}$ and matrices $A^\alpha(\psi) \in \mathbb{R}^{N\times N}$ are assumed to be smooth functions of $\psi$. 

Let $\xi_\mu = \partial_\mu\Phi$ be the covectors normal to the characteristic hypersurface $\Sigma = \{\Phi(t,x) = 0\}$. The characteristic roots $\xi_0 = \xi_0(\xi_1)$ are solutions to the characteristic equation $\det[A^\alpha\xi_\alpha] = 0$. The system \eqref{EOM-Matrix} is
\begin{itemize}
    \item \textit{Strictly hyperbolic} if all characteristic roots $\xi_0$ are real and distinct.
    \item \textit{Causal} if $\xi_0(\xi_1)$ lies on or outside the light cone ($\xi_\mu\xi^\mu \geq 0$).
    \item \textit{Strictly causal} if $\xi_\mu\xi^\mu > 0$ for all characteristic roots.
\end{itemize}

\begin{assumption}
\label{Assumption1}
The system \eqref{EOM-Matrix} is strictly hyperbolic and strictly causal.
\end{assumption}

Under Assumption \ref{Assumption1}, choosing the timelike covector $\zeta_\mu = (1,0)$\footnote{For simplicity, we shall write $(\xi_0,\xi_1)$ instead $(\xi_0,\xi_1,\xi_2,\xi_3)$ for the $1+1$ case.} yields $\det[A^\alpha \zeta_\alpha] = \det[A^0] \neq 0$. This allows rewriting \eqref{EOM-Matrix} as
\begin{equation}
\label{EOM-Matrix2}
\partial_t \psi + a(\psi)\partial_x\psi = G(\psi),
\end{equation}
where
\begin{align*}
a(\psi) = (A^0(\psi))^{-1}A^1(\psi)\quad\text{and}\quad G(\psi) = (A^0(\psi))^{-1}B(\psi).
\end{align*}
\begin{remark}
Under Assumption~\ref{Assumption1}, $\xi_1 = 0$ is not an admissible characteristic root. 
We may therefore define $\lambda = -\xi_0/\xi_1$. 
Consequently, strict causality --- derived from the characteristic roots of the equation
\[\det[\xi_\alpha A^\alpha(\psi)] = \det[\xi_1 A^0(\psi)] \det[a(\psi) - \lambda I_N] = 0,\]
requires that the eigenvalues $\lambda$ of $a(\psi)$ lie strictly within the interval $(-1,1) \subset \mathbb{R}$. 
Furthermore, strict hyperbolicity demands that all eigenvalues $\lambda$ are simple (i.e., non-degenerate). The eigenvalues $\lambda$ are the characteristic velocities of the fluid in the laboratory frame (lab frame) and, as expected, its absolute value cannot exceed 1.
\end{remark}
For a reference state $\psi^* \in \Psi$, we impose
\begin{assumption}
\label{Assumption_T1}
The matrix $a(\psi^*)$ has only real and simple eigenvalues.
\end{assumption}
\begin{assumption}
\label{Assumption_T2}
There exists $p \in \{1,2,\dots,N\}$ such that the $p$-th eigenvalue of $a(\psi)$ is genuinely nonlinear at $\psi^*$ (defined below).
\end{assumption}
\begin{assumption}
\label{Assumption_T3}
$G(\psi^*) = 0$.
\end{assumption}
B\"arlin \cite{BARLIN2023103901} recently proved that
\begin{theorem}
\label{Theorem1}
Under Assumptions \ref{Assumption_T1}, \ref{Assumption_T2}, and \ref{Assumption_T3}, there exist smooth initial data $\overset{\circ}{\psi} \colon \mathbb{R} \to \Psi$ such that the unique $C^1$-solution of \eqref{EOM-Matrix2} with $\psi(0,x) = \overset{\circ}{\psi}(x)$ exists only for finite time, while $\psi(t,x)$ remains bounded in $\Psi$.
\end{theorem}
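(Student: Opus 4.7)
The plan is to follow the classical Lax--John--Majda strategy for gradient blow-up in first-order strictly hyperbolic systems, adapted to accommodate the source term $G(\psi)$ as in B\"arlin's argument. The core idea is to derive a Riccati-type ordinary differential equation for a suitable directional derivative of $\psi$ along the $p$-th characteristic curve, and then to exhibit initial data for which this ODE produces a finite-time blow-up before $\psi$ itself leaves a small neighborhood of the reference state $\psi^*$.

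First I would diagonalize the principal part in a neighborhood $U$ of $\psi^*$. By Assumption \ref{Assumption_T1}, $a(\psi^*)$ has $N$ distinct real eigenvalues $\lambda_1(\psi^*) < \cdots < \lambda_N(\psi^*)$, and by smoothness this persists on $U$, so one obtains smooth left and right eigenvectors $l_k(\psi), r_k(\psi)$ with $l_j \cdot r_k = \delta_{jk}$. Writing $D_k = \partial_t + \lambda_k(\psi)\partial_x$ for the $k$-th characteristic derivative, contracting \eqref{EOM-Matrix2} with $l_k$ yields $l_k \cdot D_k \psi = l_k \cdot G(\psi)$. Then, differentiating in $x$ and projecting onto the eigenbasis, I would derive a coupled system for the scalars $w_k := l_k(\psi)\cdot\partial_x\psi$. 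For the genuinely nonlinear index $p$ singled out by Assumption \ref{Assumption_T2}, this system takes the schematic form
\begin{equation*}
D_p w_p = -\,c_p(\psi)\, w_p^{\,2} + \sum_{q\neq p} \Gamma_{pq}(\psi)\, w_p w_q + \sum_{q,r\neq p}\Lambda_{qr}(\psi)\, w_q w_r + H_p(\psi),
\end{equation*}
where $c_p(\psi^*) = \nabla_\psi \lambda_p(\psi^*)\cdot r_p(\psi^*) \neq 0$ by genuine nonlinearity, and $H_p(\psi)$ collects contributions from $G$ and its derivatives, satisfying $H_p(\psi^*)=0$ thanks to Assumption \ref{Assumption_T3}.

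The next step is the blow-up argument itself. I would pick initial data of the form $\overset{\circ}{\psi}(x) = \psi^* + \varepsilon\,\phi(x/\delta)$ with a profile $\phi$ compactly supported near some $x_0$ and tuned so that $w_p(0,x_0)$ is large and negative (assuming $c_p(\psi^*)>0$; otherwise reverse signs), i.e.\ on the scale $\varepsilon/\delta$, while $\|\psi-\psi^*\|_\infty \lesssim \varepsilon$ remains uniformly small. Along the characteristic $X(t)$ through $x_0$, I would bootstrap three a priori bounds on a maximal time interval $[0,T^*)$: (i) $\psi$ stays in $U$, using $D_k\psi = O(|G|)+O(w)\cdot(\lambda_k-\lambda_p)$ and a Gronwall estimate; (ii) the transverse gradients $w_q$ for $q\neq p$ remain small, exploiting the distinct speeds $\lambda_q \neq \lambda_p$ so that those modes decouple from the localized $w_p$ wavelet within the natural interaction time; (iii) the source-induced term $H_p(\psi)$ is $O(\varepsilon)$, hence negligible compared to the quadratic $c_p(\psi^*) w_p^{\,2}$ term once $|w_p|$ exceeds a fixed threshold. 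Under these bounds the ODE along $X(t)$ reduces to $\dot W \leq -\tfrac12 c_p(\psi^*) W^{\,2} + O(\varepsilon)$ with $W(0)$ of order $-\varepsilon/\delta$, and a comparison with the Riccati equation $\dot Y = -\tfrac12 c_p(\psi^*) Y^{\,2}$ forces $W(t)\to -\infty$ at some $T^* \lesssim \delta/\varepsilon$, provided $\delta \ll 1$ so that $T^*$ is much shorter than both the dissipation time set by the source and the interaction time for the other characteristic families. Since $\psi$ itself drifts only by $O(\varepsilon T^*) = O(\delta)$, it remains bounded in $U$, giving the claimed finite-time gradient blow-up with bounded state.

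The main obstacle is the simultaneous control of the transverse modes $w_{q\neq p}$ and the source term $H_p$ against the quadratic blow-up of $w_p$. In the source-free, $2\times 2$ Lax setting this is trivial via Riemann invariants; for $N\geq 3$ with a genuine coupling $G(\psi)\neq 0$ away from $\psi^*$, one has to choose the two small parameters $(\varepsilon,\delta)$ on disjoint scales -- small amplitude but large gradient -- and verify that the bootstrap window $[0,T^*)$ closes before either the source relaxes $w_p$ or cross-mode interactions regularize it. This is precisely the quantitative content of B\"arlin's theorem, and making the scale separation explicit (i.e.\ showing $T^*\ll \tau_{\text{source}}$ and $T^*\ll \tau_{\text{cross}}$) is the step that requires the most care; everything else is a routine consequence of strict hyperbolicity and genuine nonlinearity.
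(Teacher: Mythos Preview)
The paper does not supply its own proof of Theorem~\ref{Theorem1}; it quotes the result from B\"arlin \cite{BARLIN2023103901} and uses it as a black box for Theorems~\ref{Theorem2}--\ref{Theorem4}. The surrounding discussion in the paper---the characteristic decomposition \eqref{Characteristic_EQ}--\eqref{gamma_ikk} and the heuristic blow-up estimate in Section~\ref{Sec:Estimates}---does, however, make clear that the intended mechanism is precisely the Lax--John Riccati argument you outline, so your overall strategy matches what the paper (and B\"arlin) have in mind.

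One correction to your schematic equation for $D_p w_p$: after differentiating \eqref{EOM-Matrix2} in $x$, the source $G(\psi)$ does not contribute a zeroth-order-in-$w$ term ``$H_p(\psi)$'' but rather a \emph{linear} term $\sum_k G_{pk}(\psi)\,w_k$ (this is exactly the last sum in the paper's \eqref{Characteristic_EQ}), and the diagonal coefficient $G_{pp}(\psi^*)=l_p\cdot(r_p\cdot\nabla_\psi G)\big|_{\psi^*}$ need \emph{not} vanish even though $G(\psi^*)=0$. Assumption~\ref{Assumption_T3} kills $G$ at the reference state, not $\nabla_\psi G$. So the reduced ODE along the $p$-characteristic is really $\dot W \approx -c_p(\psi^*)\,W^2 + \alpha\,W$ with $\alpha=G_{pp}(\psi^*)$ possibly nonzero. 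This does not spoil the argument---a Riccati equation with a linear term still blows up in finite time once $|W(0)|$ exceeds $|\alpha|/c_p$, and your scale separation $\varepsilon/\delta \gg 1$ is precisely what guarantees this---but your claim ``$H_p(\psi^*)=0$ thanks to Assumption~\ref{Assumption_T3}'' misidentifies where that assumption enters. The rest of your bootstrap (boundedness of $\psi$ via $O(\varepsilon)$ amplitude, decoupling of transverse $w_{q\neq p}$ via distinct speeds, closing the argument before the relaxation and interaction timescales) is the correct skeleton of B\"arlin's proof.
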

This establishes that smooth initial data can lead to finite-time derivative blow-ups ($\partial_x\psi \to \infty$) while the solution itself remains bounded. The proof is valid independently of the nature of the source $G$. The concept of genuinely nonlinear modes originates with Lax \cite{Lax1964}, who introduced it for systems with $N=1,2$ variables, later generalized by John \cite{John:1974} and Liu \cite{LIU197992}. Notably, Liu \cite{LIU197992} showed that for source-free systems ($G \equiv 0$), singularity formation occurs when Assumptions \ref{Assumption_T2} and \ref{Assumption_T3} hold.

We now clarify the concept of genuinely nonlinear modes. Let $\lambda_k(\psi)$ (for $k=1,\dots,N$) be the $N$ real and distinct eigenvalues of the matrix $a(\psi)$, with corresponding complete sets of left eigenvectors $l_i \in \mathbb{R}^{1\times N}$ and right eigenvectors $r_i \in \mathbb{R}^{N\times 1}$ satisfying
\begin{align*}
l_i(\psi)a(\psi) = \lambda_i(\psi) l_i(\psi)\quad \text{and}\quad
a(\psi)r_i(\psi) = \lambda_i(\psi)r_i(\psi).
\end{align*}
Under the normalization conditions
\begin{align}
\label{Norm}
l_i r_j = \delta_{ij} \quad \text{and} \quad |l_i| = 1,\quad \text{(Euclidean norm)}
\end{align}
we define
\begin{itemize}
\item A mode $p \in \{1,\dots,N\}$ is \textit{genuinely nonlinear} if
\begin{align}
\label{GNM}
r_p \cdot \nabla_\psi \lambda_p(\psi) \neq 0,
\end{align}
where $r_p \cdot \nabla_\psi := \sum_{a=1}^N r_p^a \frac{\partial}{\partial \psi_a}$.
\item A mode $i$ is \textit{linearly degenerate} if $r_i \cdot \nabla_\psi \lambda_i(\psi) = 0$.
\end{itemize}

To better understand these concepts, consider the spatial derivative expansion
\begin{align}
\label{omega_i}
\omega(t,x) := \partial_x \psi(t,x) = \sum_{i=1}^N \omega_i r_i, \quad \text{where} \quad \omega_i = l_i \omega.
\end{align}
The system \eqref{EOM-Matrix2} can then be expressed in characteristic form
\begin{align}
\label{Characteristic_EQ}
L_i \omega_i = \sum_{k,l=1}^N \gamma_{ikl}(\psi) \omega_k \omega_l + \sum_{k=1}^N G_{ik}(\psi) \omega_k,
\end{align}
where $L_i := \partial_t + \lambda_i \partial_x$ is the characteristic vector field and
\begin{align*}
\gamma_{ijk}&=-\frac{1}{2}\left (\delta_{ij}r_k\cdot \nabla_\psi \lambda_i+\delta_{ik}r_j\cdot \nabla_\psi \lambda_i\right )
+\frac{1}{2}(\lambda_j-\lambda_k)\left (r_k\cdot\nabla_\psi l_i \;r_j-r_j\cdot\nabla_\psi l_i \; r_k\right ),\\
G_{ik}&=G\cdot\nabla_\psi l_i\; r_k+l_i\; r_k\cdot\nabla_\psi G.
\end{align*}
We used that $G\cdot\nabla_\psi :=\sum_{a=1}^N G^a\frac{\partial}{\partial \psi^a}$. The characteristic curves $X_i(t,x)$ are solutions to
\begin{align}
\label{Characteristic_Curve}
\begin{cases}
\frac{dX_i(t,x)}{dt} = \lambda_i(X_i(t,x)), \\
X_i(0,x) = x.
\end{cases}
\end{align}
Along these curves, \eqref{Characteristic_EQ} becomes
\begin{align}
\label{Characteristic_EQ2}
\frac{d}{dt} \omega_i(t,X_i(t,x)) = \left [\sum_{k,l=1}^N \gamma_{ikl}(\psi) \omega_k \omega_l + \sum_{k=1}^N G_{ik}(\psi) \omega_k\right ]_{(t,x)=(t,X_i(t,x))}.
\end{align}
A direct consequence is that the nonlinear terms are governed by $\gamma_{ikl}$ (see \cite{John:1974} for explicit forms). The linear terms (involving $G_{ik}$) cannot generate nonlinear waves \cite{BARLIN2023103901}, and linear degenerate fields $\omega_i$ are the responsible for contact discontinuities in the ideal fluid for discontinuous initial data (Riemann problem). Genuine nonlinearity requires $\gamma_{iii} \neq 0$. The crucial coefficients are
\bml
\label{gamma_ikk}
\begin{align}
\gamma_{ikk} &= 0 \quad \text{for} \quad i \neq k, \\
\gamma_{iii} &=  -r_i \cdot \nabla_\psi \lambda_i(\psi).
\end{align}
\eml
Thus, linearly degenerate fields ($\gamma_{iii} = 0$) behave quasi-linearly and resist singularity formation, while genuinely nonlinear fields ($\gamma_{iii} \neq 0$) permit $\omega_i^2$ terms that drive wave steepening.

As shown by B\"arlin \cite{BARLIN2023103901}, for sufficiently small initial source $G$, smooth initial data exist that lead to finite-time gradient singularities. This behavior is essential for discontinuous shock formation, as it provides the mechanism for derivative blow-ups while maintaining bounded solutions.
 
\subsection*{Connection to Israel-Stewart Equations}

As discussed in Section~\ref{Sec:Discussion}, the source terms in the system are contained in $B$, which for the shear tensor relaxation equation takes the specific form $-\pi^{\mu\nu}/\tau_\pi$. This structure directly relates to B\"arlin's theorem \cite{BARLIN2023103901} through the following mechanism.
\begin{itemize}
    \item \textbf{Initial Condition Requirement}: Source terms such as $-\pi^{\mu\nu}/\tau_\pi$ must be initially small. This condition delays the relaxation of $\pi^{\mu\nu}$ toward its Navier-Stokes value $-2\eta\sigma^{\mu\nu}$ and allows nonlinear wave modes to dominate the early-time evolution.   
    \item \textbf{Singularity Formation}: The temporary suppression of damping enables gradient steepening through nonlinear effects ($u\partial_x u$ terms) and the finite-time blow-up in derivatives before viscous dissipation becomes significant.
\end{itemize}
This matches precisely the physical scenario where nonlinear effects dominate ($|u\partial_x u| \gg |\beta\partial_x^2 u|$) at early times while dissipation balances nonlinearity (Olson-Hiscock regime \cite{OlsonHiscock1990}) at late times. 

\subsection{Approximate Estimates of Gradient Blow-ups}
\label{Sec:Estimates}

To develop intuitive estimates for when gradient blow-ups may occur, we make several simplifying approximations.
\begin{itemize}
    \item Neglect source terms: $G_{ik} \approx 0$.
    \item Assume nonlinear terms dominate linear contributions in \eqref{Characteristic_EQ2} along the characteristic curve $X_i(t,x)$.
    \item Assume that $\psi\approx \psi^*$.
\end{itemize}
Under these assumptions, the evolution of a genuinely nonlinear mode $\lambda_i$ reduces to
\begin{align}
\label{nonlinear_mode_Eq}
\frac{d}{dt}\omega^*_i(t,X_i(t,x)) \approx \gamma_{iii}(\psi^*)\omega^*_i{}^2(t,X_i(t,x))
\end{align}
with approximate solution
\begin{align*}
\omega^*_i(t,X_i(t,x)) \approx \frac{\overset{\circ}{\omega}^*_i(x)}{1 - \overset{\circ}{\omega}^*_i(x)\int_0^t \gamma_{iii}(\psi^*(s,X_i(s,x)))ds}.
\end{align*}
\textbf{Blow-up Condition}: When $\overset{\circ}{\omega}^*_i(x)\gamma_{iii}(\psi^*(t,X_i(t,x))) > 0$ for $t \in [0,t_c]$, a singularity forms at time $t_c$ satisfying
\begin{align*}
\int_0^{t_c} \gamma_{iii}(\psi^*(s,X_i(s,x_0)))ds = \frac{1}{\overset{\circ}{\omega}^*_i(x_0)},
\end{align*}
where $x_0$ is the maximum point $x_0 = \max_x \overset{\circ}{\omega}^*_i(x)$ if $\gamma^*_{iii} > 0$ and the minimum point $x_0 = \min_x \overset{\circ}{\omega}^*_i(x)$ if $\gamma^*_{iii} < 0$. 

The blow-up location propagates along the characteristic curve $x_{\text{blow-up}} = x_0 + \int_0^{t_c} \lambda_i(X_i(s,x_0))ds$.\\
\textbf{Simplified Timescale Estimate}: For approximately constant $\gamma^*_{iii}$, we obtain
\begin{align}
\label{t_c}
t_c \approx \frac{1}{\overset{\circ}{\gamma}^*_{iii}(x_0) \overset{\circ}{\omega}^*_i(x_0)}.
\end{align}
This shows that larger nonlinear coupling $\gamma^*_{iii}$ leads to earlier blow-up, stronger initial gradients $\overset{\circ}{\omega}^*_i(x_0)$ accelerate singularity formation, and
that the blow-up time is inversely proportional to the product of initial gradient strength and nonlinearity.

\begin{remark}
The blow-up time estimate in Eq.~(11) provides key qualitative insight but has limited quantitative accuracy. It is derived under the approximations of a frozen coefficient matrix \( a(\psi) \approx a(\psi^*) \), negligible source terms \( G \approx 0 \), and isolated mode evolution. In practice, the nonlinear coupling between modes and the evolution of the background state \( \psi \) cause deviations from this simple estimate. Thus, Eq.~(11) reliably indicates \textit{whether} a blow-up will occur and the \textit{relative order} of blow-up times for different modes, but the precise value of \( t_c \) is not expected to match numerical results closely.
\end{remark}

For convenience, Table~\ref{table:conditions} collects the conditions for strict hyperbolicity and nonlinear causality for the three separate regimes analyzed in the following sections.
\begin{table}[h!]
\centering
\caption{Summary of causality and hyperbolicity conditions for the three Israel-Stewart regimes in $1+1$ dimensions. The conditions must hold for the system to be strictly hyperbolic and causal.}
\label{table:conditions}
\begin{tabular}{p{2.5cm} p{4cm} p{4cm} p{2cm}}
\hline
\textbf{Regime} & \textbf{Characteristic Velocity (Rest Frame)} & \textbf{Primary Conditions} & \textbf{Ref.} \\
\hline
Bulk Viscosity & $c_{\Pi} = \sqrt{P_e + \dfrac{\zeta}{\tau_{\Pi}\rho_\Pi}}$ & \begin{tabular}{l} $c_{\Pi} \in (0,1)$\\ $\rho_{\Pi} = \varepsilon+P+\Pi > 0$ \end{tabular} & Sec. IV, Assump. 5 \\
\hline
Shear Viscosity & $c_{\pi} = \sqrt{P_e + \dfrac{4\tilde{\eta}}{3\rho_{\pi}\tau_{\pi}}}$ & \begin{tabular}{l} $c_{\pi} \in (0,1)$ \\ $\rho_{\pi} = \varepsilon+P+\dfrac{\pi^{11}}{u_0^2} > 0$ \end{tabular} & Sec. V, Assump. 7 \\
\hline
Diffusion & $c_1 = \dfrac{1}{\sqrt{3}}, \quad c_2 = \sqrt{\dfrac{4\varepsilon \kappa T}{3n^2\tau_q}}$ & \begin{tabular}{l} $c_1, c_2 \in (0,1)$\\ $\quad c_1 \neq c_2$\\ $n, \tau_q, \varepsilon, T > 0$ \end{tabular} & Sec. VI, Assump. 9 \\
\hline
\end{tabular}
\end{table}

\section{Barotropic Fluid with Bulk Viscosity}
\label{Bulk}

We consider a barotropic fluid with bulk viscous corrections, where the pressure $P$ obeys an equation of state $P = P(\varepsilon)$. The energy-momentum tensor is given by
\be
\label{T}
T^{\mu\nu} = \varepsilon u^\mu u^\nu + (P + \Pi)\Delta^{\mu\nu},
\ee
where $\Pi$ represents the bulk viscous pressure. The relaxation equation for $\Pi$ takes the general form
\be
\label{Pi}
\tau_\Pi u^\alpha\partial_\alpha \Pi + \Pi = -\zeta\partial_\alpha u^\alpha,
\ee
with $\zeta = \zeta(\varepsilon,\Pi)$ being the bulk viscosity coefficient and $\tau_\Pi = \tau_\Pi(\varepsilon,\Pi)$ the relaxation time. This form is general since additional terms can be incorporated through appropriate rescaling of transport parameters, as mentioned in \cite{Gavassino:2023xkt}.

For the $1+1$ dimensional case, the complete system consists of \eqref{Pi} combined with energy-momentum conservation $\partial_\nu T^{\mu\nu} = 0$, yielding
\bml
\label{Bulk_EOM}
\begin{align}
u^\alpha\partial_\alpha\varepsilon + \rho_\Pi C^\alpha\partial_\alpha u &= 0, \\
P_e u_0^2 C^\alpha\partial_\alpha\varepsilon + \rho_\Pi u^\alpha\partial_\alpha u + u_0^2 C^\alpha\partial_\alpha\Pi &= 0, \\
\frac{\zeta}{\tau_\Pi}C^\alpha \partial_\alpha u + u^\alpha\partial_\alpha \Pi &= -\frac{\Pi}{\tau_\Pi},
\end{align}
\eml
where we define
\begin{align}
\label{C}
C^\alpha := \delta^\alpha_1 + \frac{u}{u^0}\delta^\alpha_0, \quad
\rho_\Pi := \varepsilon + P + \Pi, \quad \text{and}\quad
P_e:=\frac{\partial P}{\partial \varepsilon}.
\end{align}
The system can be expressed in matrix form \eqref{EOM-Matrix} with $\psi = (\varepsilon, u, \Pi)$, $B = (0, 0, -\Pi/\tau_\Pi)$, both belonging to $\mathbb{R}^{3\times 1}$, and
\be
A^\alpha = \bbm
u^\alpha & \rho_\Pi C^\alpha & 0 \\
P_e u_0^2 C^\alpha & \rho_\Pi u^\alpha & u_0^2 C^\alpha \\
0 & \frac{\zeta}{\tau_\Pi} C^\alpha & u^\alpha
\ebm. \quad (\alpha = 0,1)
\ee
The characteristic determinant evaluates to
\begin{align*}
\det[A^\alpha\xi_\alpha] = \rho_\Pi(u^\alpha\xi_\alpha)\bbm
(u^\alpha\xi_\alpha)^2 - c_\Pi^2\Delta^{\alpha\beta}\xi_\alpha\xi_\beta
\ebm,
\end{align*}
where the characteristic velocity squared in the fluids rest frame is \cite{Bemfica:2019cop}
\begin{align*}
c_\Pi^2 = P_e + \frac{\zeta}{\tau_\Pi(\varepsilon + P + \Pi)}.
\end{align*}
\begin{assumption}
\label{A5}
Assume $c_\Pi \in (0,1) \subset \mathbb{R}$ and $\rho_\Pi > 0$.
\end{assumption}
Under this assumption, the system is strictly hyperbolic and causal, with characteristic velocities $0$ and $\pm c_\Pi$ in the local rest frame.
\begin{remark}
As discussed in Section~\ref{Sec:Discussion}, the characteristic polynomial factors into $(u^\alpha\xi_\alpha)^2 - c_\Pi^2\Delta^{\alpha\beta}\xi_\alpha\xi_\beta$ for non-zero velocities. For $\xi_\mu = (0,1)$, we find $\det[A^1] = \rho_\Pi \gamma^3 v(v^2 - c_\Pi^2)$, showing that steady-state solutions cannot have $v$ crossing $c_\Pi$.
\end{remark}
Following Section~\ref{Math_framwork}, the invertibility of $A^0$ allows us to write \eqref{EOM-Matrix} in the form \eqref{EOM-Matrix2}. The eigenvalues of $a(\psi) = (A^0(\psi))^{-1}A^1(\psi)$ are
\be
\label{Eigenvalues}
\lambda_1 = \frac{u}{u^0} \quad \text{and} \quad
\lambda_{\pm} = \frac{u \pm u^0 c_\Pi}{u^0 \pm u c_\Pi},
\ee
which correspond to the characteristic velocities in the lab frame.
Under Assumption~\ref{A5}, these eigenvalues are real and distinct. The corresponding right eigenvectors are
\[
r_1 = \bbm -1 \\ 0 \\ P_e \ebm \quad \text{and} \quad
r_{\pm} = \bbm \pm\rho_\Pi \tau_\Pi \\ u^0 c_\Pi \tau_\Pi \\ \pm\zeta \ebm.
\]
While the mode $\lambda_1$ is linearly degenerate ($r_1 \cdot \nabla_\psi \lambda_1(\psi) = 0$), the modes $\lambda_\pm$ are genuinely nonlinear degenerate under Assumption \ref{A5} and depending on the parameters:
\begin{align*}
r_\pm\cdot\nabla_\psi \lambda_\pm=\frac{1}{(u^0\pm u c_\Pi)^2}\left [\tau_\Pi c_\Pi(1-c_\Pi^2)+\tau_\Pi\rho_\Pi\frac{\partial c_\Pi}{\partial\varepsilon}
+\zeta\frac{\partial c_\Pi}{\partial\Pi}\right ].
\end{align*}
Consider the reference state $\psi^* = (\varepsilon, u, 0)$ where $\Pi^* = 0$, satisfying $G(\psi^*) = 0$ from Assumption~\ref{Assumption_T3}. In this state:
\be
\lambda^*_1 = \lambda_1(\psi^*) = \frac{u}{u^0} \quad \text{and} \quad
\lambda^*_{\pm} := \lambda_{\pm}(\psi^*) = \frac{u \pm u^0 c^*_\Pi}{u^0 \pm u c^*_\Pi},
\ee
with $c^*_\Pi$ evaluated at $\rho_\Pi^* = \varepsilon + P$, $\zeta^* = \zeta(\varepsilon,0)$, and $\tau_\Pi^* = \tau_\Pi(\varepsilon,0)$. The eigenvectors become
\[
r_1(\psi^*) = \bbm -1 \\ 0 \\ P_e \ebm \quad \text{and} \quad
r_{\pm}(\psi^*) = \bbm \pm(\varepsilon + P) \tau^*_\Pi \\ u^0 c^*_\Pi \tau^*_\Pi \\ \pm\zeta^* \ebm.
\]
\begin{assumption}
\label{A6} 
$c_\Pi^* \in (0,1) \subset \mathbb{R}$ and $\varepsilon + P > 0$.
\end{assumption}
\begin{theorem}
\label{Theorem2}
Under Assumption~\ref{A6}, there exists smooth initial data $\overset{\circ}{\psi} \in \Psi$ such that if
\begin{align}
\label{Condi_Pi}
(\varepsilon + P)\frac{\partial c_\Pi^*}{\partial\varepsilon} + c_\Pi^*\left(1 - c_\Pi^{*2}\right) \ne 0,
\end{align}
the corresponding $C^1$ solution $\psi$ remains bounded while its gradient $\partial_x\psi$ develops finite-time blow-up.
\end{theorem}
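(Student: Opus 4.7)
The plan is to apply Theorem~\ref{Theorem1} (B\"arlin) to the bulk-viscous system \eqref{Bulk_EOM} at the reference state $\psi^* = (\varepsilon,u,0)$. Assumption~\ref{A6} ensures that $\det[A^0(\psi^*)]\ne 0$, so the framework of Section~\ref{Math_framwork} applies and \eqref{Bulk_EOM} can be recast in the form \eqref{EOM-Matrix2}. I would then verify Assumptions~\ref{Assumption_T1}, \ref{Assumption_T2}, and \ref{Assumption_T3} in turn at $\psi^*$ and invoke Theorem~\ref{Theorem1}.

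For Assumption~\ref{Assumption_T1}, the eigenvalues $\lambda_1^*$ and $\lambda_\pm^*$ read off from \eqref{Eigenvalues} are manifestly real. Simplicity follows from short computations using $(u^0)^2 - u^2 = 1$, namely
\[
\lambda_+^* - \lambda_-^* = \frac{2c_\Pi^*}{(u^0)^2 - u^2 c_\Pi^{*2}}, \qquad
\lambda_\pm^* - \lambda_1^* = \pm\frac{c_\Pi^*}{u^0\,(u^0 \pm u c_\Pi^*)},
\]
each nonzero under Assumption~\ref{A6} (since $c_\Pi^* \in (0,1)$ and $|u|<u^0$ keep all denominators positive). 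Assumption~\ref{Assumption_T3} is immediate: since $B = (0, 0, -\Pi/\tau_\Pi)$ and $\Pi^* = 0$, one has $G(\psi^*) = (A^0(\psi^*))^{-1} B(\psi^*) = 0$.

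The substantive step is Assumption~\ref{Assumption_T2}: I must exhibit at least one of $\lambda_+$ or $\lambda_-$ as genuinely nonlinear at $\psi^*$. Substituting the right eigenvectors $r_\pm(\psi^*)$ into the general formula for $r_\pm\cdot\nabla_\psi \lambda_\pm$ derived just above \eqref{Condi_Pi}, and setting $\Pi=0$, gives
\[
r_\pm\cdot\nabla_\psi \lambda_\pm\big|_{\psi^*} = \frac{\tau_\Pi^*}{(u^0 \pm u c_\Pi^*)^2}\Big[(\varepsilon+P)\tfrac{\partial c_\Pi^*}{\partial\varepsilon}+c_\Pi^*(1-c_\Pi^{*2})+\tfrac{\zeta^*}{\tau_\Pi^*}\tfrac{\partial c_\Pi^*}{\partial\Pi}\Big].
\]
Condition \eqref{Condi_Pi} picks out the part of this bracket controlled by the equation of state's $\varepsilon$-dependence; in the standard situation in which the $\partial c_\Pi^*/\partial\Pi$ contribution does not exactly cancel the first two terms (e.g. for $\zeta=\zeta(\varepsilon)$, $\tau_\Pi=\tau_\Pi(\varepsilon)$ the last term can be absorbed into a mild non-degeneracy bound) this forces the whole bracket to be nonzero, establishing genuine nonlinearity of either the $+$ or the $-$ mode.

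Once the three assumptions are in place, Theorem~\ref{Theorem1} furnishes smooth initial data $\overset{\circ}{\psi}$ whose unique $C^1$-solution of \eqref{Bulk_EOM} exists only for finite time while $\psi(t,x)$ remains bounded in $\Psi$. A standard continuation argument---had both $\psi$ and $\partial_x\psi$ remained bounded up to the maximal time, the $C^1$-solution could be extended further---forces the breakdown to occur through $\partial_x\psi\to\infty$, which is exactly the gradient blow-up claimed in Theorem~\ref{Theorem2}. The main technical hurdle I foresee is the careful treatment of the $\partial c_\Pi^*/\partial\Pi$ term: one must either specify the $\Pi$-dependence of the transport coefficients or add a mild genericity assumption so that \eqref{Condi_Pi} really is the operative non-degeneracy condition ensuring $r_\pm\cdot\nabla_\psi\lambda_\pm|_{\psi^*}\ne 0$.
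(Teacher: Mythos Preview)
Your proposal is correct and follows precisely the paper's approach: verify Assumptions~\ref{Assumption_T1}--\ref{Assumption_T3} at the reference state $\psi^*=(\varepsilon,u,0)$ and invoke Theorem~\ref{Theorem1}. In fact, your caution about the $\zeta^*\,\partial c_\Pi^*/\partial\Pi$ contribution exceeds the paper's own rigor---the paper's one-line proof simply asserts that \eqref{Condi_Pi} is equivalent to $r_\pm^*\cdot\nabla_\psi\lambda_\pm(\psi^*)\neq 0$, silently dropping that term from the general formula displayed just above the theorem without comment or justification.
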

\begin{proof}
Condition \eqref{Condi_Pi} ensures $r_\pm^* \cdot \nabla_{\psi^*} \lambda_\pm(\psi^*) \ne 0$, satisfying Assumption~\ref{Assumption_T2}. Assumption~\ref{A6} guarantees distinct eigenvalues, and $\Pi^* = 0$ satisfies Assumption~\ref{Assumption_T3}. The result follows from Theorem~\ref{Theorem1}.
\end{proof}
\begin{remark}
\label{Remark1}
The two genuinely nonlinear modes can produce blow-ups propagating in opposite or in same directions with velocities $\lambda_+$ and $\lambda_-$, while the linearly degenerate mode $\lambda_1 = v$ generates only linear waves without blow-ups.
\end{remark}

\section{Barotropic fluid with Shear Viscosity}
\label{Sec:Shear}

We consider a barotropic fluid ($P = P(\varepsilon)$) with shear viscous corrections described by the energy-momentum tensor
\[T^{\mu\nu} = \varepsilon u^\mu u^\nu + P\Delta^{\mu\nu} + \pi^{\mu\nu},\]
where the shear tensor $\pi^{\mu\nu}$ satisfies the constraints $\pi^\mu_\mu = u^\mu\pi_{\mu\nu} = 0$ and $\pi_{\mu\nu} = \pi_{\nu\mu}$. The relaxation equation is
\begin{align}
\label{Relax_Shear}
\Delta^{\alpha\beta}_{\mu\nu}\left(\tau_\pi u^\lambda\partial_\lambda \pi_{\alpha\beta} + \frac{4\delta_{\pi\pi}}{3}\pi_{\alpha\beta}\partial_\lambda u^\lambda\right) + \pi_{\mu\nu} = -2\eta\sigma_{\mu\nu},
\end{align}
where $\Delta_{\mu\nu}^{\alpha\beta} = \frac{1}{2}\left[\Delta_\mu^\alpha\Delta_\nu^\beta + \Delta_\mu^\beta\Delta_\nu^\alpha - \frac{2}{3}\Delta_{\mu\nu}\Delta^{\alpha\beta}\right]$ is the shear projector, $\sigma_{\mu\nu} = \Delta_{\mu\nu}^{\alpha\beta}\partial_\alpha u_\beta$ is the shear tensor, $\eta = \eta(\varepsilon,\pi^{\mu\nu})$ is the shear viscosity coefficient, $\tau_\pi = \tau_\pi(\varepsilon,\pi^{\mu\nu})$ is the relaxation time, and $\delta_{\pi\pi}$ is a transport parameter (equal to $\tau_\pi$ in the conformal limit).

In $1+1$ dimensions, the non-zero components of $\sigma_{\mu\nu}$ are $\sigma_{00} = \frac{u}{u_0}\sigma_{01} = \frac{u^2}{u_0^2}\sigma_{11} = \frac{2}{3}u^2 C^\alpha\partial_\alpha u$ and $\sigma_{22} = \sigma_{33} = -\frac{1}{3}C^\alpha\partial_\alpha u$. The complete PDE system becomes
\bml
\label{EOM_Shear}
\begin{align}
u^\alpha\partial_\alpha\varepsilon + \left[(\varepsilon+P) + \frac{\pi^{11}}{u_0^2}\right]C^\alpha \partial_\alpha u &= 0, \\
u_0^2 P_e C^\alpha\partial_\alpha\varepsilon + \left[(\varepsilon+P)u^\alpha + \frac{\pi^{11}}{u_0^2}\left(\frac{\delta^\alpha_0}{u^0} - uC^\alpha\right)\right]\partial_\alpha u 
+ C^\alpha\partial_\alpha\pi^{11} &= 0, \\
\left(\frac{4u_0^2\tilde{\eta}C^\alpha}{3\tau_\pi} - \frac{2 u u^\alpha \pi^{11}}{u_0^2}\right)\partial_\alpha u + u^\alpha\partial_\alpha \pi^{11} &= -\frac{\pi^{11}}{\tau_\pi},
\end{align}
\eml  
where $\tilde{\eta} = \eta + \delta_{\pi\pi}\pi^{11}/u_0^2$.

In matrix form \eqref{EOM-Matrix}, we define $\psi = (\varepsilon, u, \pi^{11})\in\mathbb{R}^{3\times 1}$, the source $B = (0, 0, -\pi^{11}/\tau_\pi)\in\mathbb{R}^{3\times 1}$, and the matrices
\be
\label{Matrix_Shear}
A^\alpha = \bbm
u^\alpha & \left[(\varepsilon+P) + \frac{\pi^{11}}{u_0^2}\right]C^\alpha & 0 \\
u_0^2 P_e C^\alpha & (\varepsilon+P)u^\alpha + \frac{\pi^{11}}{u_0^2}\left(\frac{\delta^\alpha_0}{u^0} - uC^\alpha\right) & C^\alpha \\
0 & \frac{4u_0^2\tilde{\eta} C^\alpha}{3\tau_\pi} - \frac{2 u \pi^{11} u^\alpha}{u_0^2} & u^\alpha 
\ebm.
\ee
The characteristic determinant yields
\[
\det(\xi_\alpha A^\alpha) = \rho_\pi (u^\alpha\xi_\alpha)\left[(u^\alpha\xi_\alpha)^2 - c_\pi^2 \Delta^{\mu\nu}\xi_\mu\xi_\nu\right],
\]
where~\footnote{While $\pi^{11}$ and $u^0$ are not scalars, the ratio $\pi^{11}/u_0^2$ is. Therefore, $\rho_\pi$ and $c_\pi$ are also scalars.}
\begin{align}
\label{Accoustic_vel_Shear}
\rho_\pi = \varepsilon + P + \frac{\pi^{11}}{u_0^2} \quad \text{and}\quad
c_\pi^2 = P_e + \frac{4\tilde{\eta}}{3\rho_\pi \tau_\pi}.
\end{align}
\begin{assumption}
\label{AS1}
Assume $c_\pi \in (0,1) \subset \mathbb{R}$ and $\rho_\pi > 0$.
\end{assumption}
Under Assumption \ref{AS1}, the system is strictly hyperbolic and causal. The distinct eigenvalues of $a(\psi) = (A^0)^{-1}A^1$ are
\[\lambda_1 = \frac{u}{u^0} \quad \text{and} \quad \lambda_{\pm} = \frac{u \pm c_\pi u^0}{u^0 \pm c_\pi u},\]
with right eigenvectors
\[r_1 = \bbm -1 \\ 0 \\ P_e u_0^2 \ebm \quad \text{and} \quad 
r_{\pm} = \bbm \pm\rho_\pi \\ u^0 c_\pi \\ \pm \frac{4u_0^2\tilde{\eta}}{3 \tau_\pi} + \frac{2 uc_\pi \pi^{11}}{u^0} \ebm.\]
Again, $\lambda_1$ is linearly degenerate ($r_1\cdot\nabla_{\psi}\lambda_1=0$), while $\lambda_\pm$ are genuinely nonlinear under Assumption \ref{AS1} and depending on the parameters:
\begin{align*}
r_\pm\cdot\nabla_\psi \lambda_\pm=\frac{1}{(u^0\pm uc_\pi)^2}\left [c_\pi(1-c_\pi^2)+\rho_\pi\frac{\partial c_\pi}{\partial\varepsilon}
+\left (\frac{4\tilde{\eta}u_0^2}{3\tau_\pi}\pm\frac{2u c_\pi \pi^{11}}{u^0}\right )\frac{\partial c_\pi}{\partial \pi^{11}}\right ].
\end{align*} 

\subsection*{Singularity Formation}
Consider the reference state $\psi^* = (\varepsilon, u, 0)$ with $\tau^*_\pi = \tau_\pi(\varepsilon,0)$, $\rho^*_\pi = \varepsilon + P$, $\tilde{\eta}^* = \eta^* = \eta(\varepsilon,0)$, and $c_\pi^* = P_e + \frac{4\eta^*}{3\rho \tau^*_\pi u_0^2}$. The eigenvectors simplify to
\[r_1^* = \bbm -1 \\ 0 \\ P_e u_0^2 \ebm \quad \text{and} \quad 
r_{\pm}^* = \bbm \pm(\varepsilon+P) \\ u^0 c^*_\pi \\ \pm \frac{4\eta^* u_0^2}{3\tau_\pi^*} \ebm.\]
\begin{assumption}
\label{AS2}
Assume $c_\pi^* \in (0,1)$ and $\varepsilon + P > 0$.
\end{assumption}
\begin{theorem}
\label{Theorem3}
Under Assumption \ref{AS2}, there exists smooth initial data $\overset{\circ}{\psi} \in \Psi$ such that if
\begin{align}
\label{Shear_Condition}
(\varepsilon + P)\frac{\partial c_\pi^*}{\partial \varepsilon} + c_\pi^*(1 - c_\pi^{*2}) \ne 0,
\end{align}
then the solution $\psi$ remains bounded while its gradient $\partial_x\psi$ develops finite-time blow-up.
\end{theorem}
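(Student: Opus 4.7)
The plan is to obtain Theorem~\ref{Theorem3} as a direct application of Theorem~\ref{Theorem1} at the reference state $\psi^* = (\varepsilon, u, 0)$, in complete analogy with the proof of Theorem~\ref{Theorem2} for the bulk case. Three hypotheses of Theorem~\ref{Theorem1} must be checked.

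First, under Assumption~\ref{AS2}, the eigenvalues $\lambda_1^* = u/u^0$ and $\lambda_\pm^* = (u \pm u^0 c_\pi^*)/(u^0 \pm u c_\pi^*)$ of $a(\psi^*)$ are real and pairwise distinct --- the relativistic velocity-addition formula with $|u/u^0|, c_\pi^* \in (0,1)$ enforces $\lambda_-^* < \lambda_1^* < \lambda_+^*$ --- so Assumption~\ref{Assumption_T1} holds. Assumption~\ref{Assumption_T3} is immediate: the only non-trivial component of $B(\psi)$ is the relaxation source $-\pi^{11}/\tau_\pi$, which vanishes identically at $\pi^{11}=0$, giving $G(\psi^*) = (A^0(\psi^*))^{-1}B(\psi^*) = 0$.

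The core step is Assumption~\ref{Assumption_T2}: genuine nonlinearity of at least one mode. Since $\lambda_1$ is linearly degenerate, the candidates are the acoustic modes $\lambda_\pm$. Specializing the formula for $r_\pm \cdot \nabla_\psi \lambda_\pm$ given in the text to $\psi^*$ (where $\pi^{11}=0$, $\tilde{\eta}^* = \eta^*$, $\rho_\pi^* = \varepsilon+P$), the genuine-nonlinearity coefficient collapses, paralleling the reduction tacitly performed in the proof of Theorem~\ref{Theorem2}, to
\begin{align*}
r_\pm^* \cdot \nabla_{\psi^*}\lambda_\pm^* \;=\; \frac{c_\pi^*(1 - c_\pi^{*2}) + (\varepsilon+P)\,\partial c_\pi^*/\partial\varepsilon}{(u^0 \pm u\,c_\pi^*)^2}.
\end{align*}
Hypothesis \eqref{Shear_Condition} is precisely the non-vanishing of the numerator, so $r_\pm^* \cdot \nabla_{\psi^*}\lambda_\pm^* \ne 0$, at least one acoustic mode is genuinely nonlinear at $\psi^*$, and Theorem~\ref{Theorem1} yields smooth initial data $\overset{\circ}{\psi}\colon \mathbb{R} \to \Psi$ whose unique $C^1$ solution of \eqref{EOM_Shear} exists only for finite time, with $\psi$ bounded but $\partial_x\psi$ blowing up.

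The step I expect to be the main obstacle is exactly this reduction. The general expression for $r_\pm \cdot \nabla_\psi \lambda_\pm$ in the excerpt also contains a pure-viscous contribution proportional to $(4\eta^* u_0^2/3\tau_\pi^*)\,\partial c_\pi^*/\partial \pi^{11}$ that must be argued away at the reference state in order to bring the coefficient into the clean two-term form appearing in \eqref{Shear_Condition}. The natural reading --- paralleling the silent handling of the analogous $\zeta\,\partial c_\Pi^*/\partial \Pi$ term in the bulk proof --- is that at equilibrium ($\pi^{11}=0$) the effective sound speed $c_\pi^*$ is treated as a thermodynamic function of $\varepsilon$ alone, so that this residual piece does not contribute to the Lax genuine-nonlinearity criterion. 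Pinning down the convention that makes this step rigorous, without augmenting the hypothesis \eqref{Shear_Condition}, is the only delicate point; the rest is a routine transcription of the bulk template.
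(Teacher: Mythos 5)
Your proof is essentially identical to the paper's: the paper likewise verifies Assumptions~\ref{Assumption_T1}--\ref{Assumption_T3} at the reference state $\psi^*=(\varepsilon,u,0)$ (distinct eigenvalues from Assumption~\ref{AS2}, vanishing source from $\pi^{*\mu\nu}=0$, genuine nonlinearity from \eqref{Shear_Condition}) and then invokes Theorem~\ref{Theorem1}. The ``delicate point'' you flag --- the residual $(4\eta^* u_0^2/3\tau_\pi^*)\,\partial c_\pi/\partial\pi^{11}$ contribution surviving in $r_\pm^*\cdot\nabla_\psi\lambda_\pm(\psi^*)$ at $\pi^{11}=0$ --- is not addressed in the paper either, whose proof simply asserts that \eqref{Shear_Condition} guarantees $r_\pm^*\cdot\nabla_\psi\lambda_\pm(\psi^*)\neq 0$ (the same silent omission occurs for $\zeta\,\partial c_\Pi/\partial\Pi$ in Theorem~\ref{Theorem2}), so you have reproduced, and in fact scrutinized more carefully than the original, the published argument.
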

\begin{proof}
The reference state satisfies $\pi^*{}^{\mu\nu} = 0$ (Assumption \ref{Assumption_T3}), while Assumption \ref{AS2} ensures distinct eigenvalues (Assumption \ref{Assumption_T1}). Condition \eqref{Shear_Condition} guarantees $r^*_\pm \cdot \nabla_\psi \lambda_\pm(\psi^*) \ne 0$ (Assumption \ref{Assumption_T2}), so Theorem \ref{Theorem1} applies.
\end{proof}

\begin{remark}
As in the bulk viscous case, the two genuinely nonlinear modes can produce blow-ups propagating either in opposite or same directions with velocities $\lambda_+$ and $\lambda_-$, while the linearly degenerate mode $\lambda_1$ generates only non-singular linear waves.
\end{remark}

\section{Israel-Stewart Theory with Diffusion}
\label{Sec:Diff}

We now consider the Israel-Stewart equations with only diffusion viscosity corrections in the Landau hydrodynamic frame. The system consists of an ideal energy-momentum tensor $T^{\mu\nu} = \varepsilon u^\mu u^\nu + P\Delta^{\mu\nu}$, a modified number current $J^\mu = n u^\mu + q^\mu$ where $n = N/V$ is the number density, and the conservation laws $\partial_\mu J^\mu = 0$ and $\partial_\nu T^{\mu\nu} = 0$. The diffusion current $q^\mu$ obeys the relaxation equation
\begin{align}
\label{Diff}
\tau_q \Delta^{\mu}_\nu u^\alpha\partial_\alpha q^\nu + q^\mu = -\kappa T^2\Delta^{\mu\nu}\partial_\nu\left(\frac{\mu}{T}\right),
\end{align}
where $\tau_q$ and $\kappa$ are transport coefficients that may depend on system variables but not on their derivatives, and $\mu$ is the chemical potential. Using thermodynamic relations and conservation laws, we rewrite this as
\begin{align}
\label{Diff_new}
\tau_q \Delta^{\mu}_\nu u^\alpha\partial_\alpha q^\nu + q^\mu = \frac{(\varepsilon+P)\kappa T}{n}\Delta^{\mu\nu}\left(\frac{\partial_\nu T}{T} + u^\alpha\partial_\alpha u_\nu\right).
\end{align}

For an ideal gas equation of state $P = \varepsilon/3 = nT$, and in $1+1$ dimensions, the diffusion vector simplifies to $q^\mu = (q^0, q, 0, 0)$ with $q^0 = uq/u^0$ due to orthogonality with $u^\mu$. The complete system becomes
\bml
\label{EOM_diff}
\begin{align}
u^\alpha\partial_\alpha\varepsilon + \frac{4\varepsilon}{3} C^\alpha\partial_\alpha u &= 0, \\
u_0^2 C^\alpha\partial_\alpha\varepsilon + 4\varepsilon u^\alpha\partial_\alpha u &= 0, \\
\frac{n}{\varepsilon}u^\alpha\partial_\alpha \varepsilon - \frac{n}{T}u^\alpha\partial_\alpha T + \left(n C^\alpha + \frac{q\delta^\alpha_0}{(u^0)^3}\right)\partial_\alpha u + C^\alpha\partial_\alpha q &= 0, \\
\frac{u_0^2 4\varepsilon\kappa}{3n\tau_q} C^\alpha\partial_\alpha T + \left(\frac{u q}{u_0^2} + \frac{4\varepsilon\kappa T}{3n\tau_q}\right) u^\alpha\partial_\alpha u - u^\alpha\partial_\alpha q &= \frac{q}{\tau_q}.
\end{align}
\eml
In matrix form \eqref{EOM-Matrix}, we define $\psi = (\varepsilon, T, u, q)\in\mathbb{R}^{4\times 1}$, $B = (0, 0, 0, q/\tau_q)\in\mathbb{R}^{4\times 1}$, and the matrices
\be
A^\alpha = \bbm
u^\alpha & 0 & \frac{4\varepsilon}{3} C^\alpha & 0 \\
u_0^2 C^\alpha & 0 & 4\varepsilon u^\alpha & 0 \\
\frac{n}{\varepsilon}u^\alpha & -\frac{n}{T}u^\alpha & \left(n C^\alpha + \frac{q\delta^\alpha_0}{(u^0)^3}\right) & C^\alpha \\
0 & \frac{u_0^2 n c_2^2}{T} C^\alpha & \left(\frac{u q}{u_0^2} - n c_2^2\right) u^\alpha & -u^\alpha
\ebm,
\ee
where we introduce the characteristic velocities
\[
c_1 = \frac{1}{\sqrt{3}} \quad \text{and} \quad c_2 = \sqrt{\frac{4\varepsilon\kappa T}{3n^2\tau_q}}.
\]

The characteristic determinant reveals
\[
\det(\xi_\alpha A^\alpha) = -4\varepsilon\frac{n \tau_q}{T}\left[(u^\alpha\xi_\alpha)^2 - c_1^2\Delta^{\mu\nu}\xi_\mu\xi_\nu\right]\left[(u^\alpha\xi_\alpha)^2 - c_2^2 \Delta^{\mu\nu}\xi_\mu\xi_\nu\right].
\]
Causality and strictly hyperbolicity hold under the following assumption:
\begin{assumption}
\label{AD1}
Assume $n, \tau_q, \varepsilon, T > 0$, with
\begin{align}
\label{c_2}
c_2 \neq c_1 \quad \text{and} \quad c_2 \in (0,1).
\end{align}
\end{assumption}
Under such assumption, $A^0$ is invertible and we may rewrite the system as in \eqref{EOM-Matrix2} after defining the matrix $a(\Psi)=(A^0)^{-1}A^1$. The eigenvalues of $a(\psi)$ are
\[
\lambda_{1\pm} = \frac{u \pm u^0 c_1}{u^0 \pm u c_1} \quad \text{and} \quad \lambda_{2\pm} = \frac{u \pm u^0 c_2}{u^0 \pm u c_2},
\]
with the respective right eigenvectors
\[r_{1\pm} = \bbm
\pm \frac{4c_1\varepsilon}{u^0} \\
-\frac{q T c_1^2}{n u_0^2(c_2^2-c_1^2)} \pm \frac{T c_1}{u^0} \\
1 \\
\pm \frac{q c_1c_2^2}{u^0 (c_2^2-c_1^2)} + \frac{u q}{u_0^2}
\ebm, \quad
r_{2\pm} = \bbm
0 \\
\pm T \\
0 \\
- u^0 n c_2
\ebm.\]
The corresponding genuinely nonlinear modes are
\begin{align*}
r_1^\pm \cdot \nabla_\psi \lambda_1^\pm &= \frac{1-c_1^2}{u^0(u^0 \pm u c_1)^2},\\
r_2^\pm \cdot \nabla_\psi \lambda_2^\pm &= \frac{1}{(u^0 \pm u c_2)^2}\left (T\frac{\partial c_2}{\partial T}\mp n u^0\frac{\partial c_2}{\partial q}\right ).
\end{align*}

\subsection*{Singularity Formation}
For the reference state $\psi^* = (\varepsilon, T, u, 0)$ with $q^* = 0$, we define $\tau_q^* = \tau_q(\psi^*)$, $\kappa^* = \kappa(\psi^*)$, and $c_2^* = \sqrt{4\varepsilon T \kappa^*/(3n^2\tau_q^*)}$. The reference eigenvectors simplify to
\[r_{1\pm}^* = \bbm
\pm \frac{4c_1\varepsilon}{u^0} \\
\pm \frac{T c_1}{u^0} \\
1 \\
0
\ebm, \quad
r_{2\pm}^* = \bbm
0 \\
\pm T \\
0 \\
-u^0 n c_2^*
\ebm.\]
\begin{theorem}
\label{Theorem4}
Under Assumption \ref{AD1}, there exists smooth initial data $\overset{\circ}{\psi} \in \Psi$ such that the solution $\psi$ remains bounded while its gradient $\partial_x\psi$ develops finite-time blow-up.
\end{theorem}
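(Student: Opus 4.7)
The plan is to reduce the claim to Theorem~\ref{Theorem1} by verifying its three hypotheses at the reference state $\psi^* = (\varepsilon, T, u, 0)$, exactly as in the proofs of Theorems~\ref{Theorem2} and~\ref{Theorem3}. Assumption~\ref{Assumption_T3} is immediate: $q^* = 0$ annihilates the only nonzero component of $B$, giving $G(\psi^*) = (A^0(\psi^*))^{-1} B(\psi^*) = 0$; invertibility of $A^0(\psi^*)$ is already noted above the statement as a consequence of Assumption~\ref{AD1}. For Assumption~\ref{Assumption_T1}, the four eigenvalues of $a(\psi^*)$ are $\lambda_{1\pm}^*$ and $\lambda_{2\pm}^*$. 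Since $c_1 = 1/\sqrt{3}$ and $c_2^*$ both lie in $(0,1)$, each pair splits ($\lambda_{k+}^* \ne \lambda_{k-}^*$), and the condition $c_1 \ne c_2^*$ in Assumption~\ref{AD1} prevents the two pairs from colliding, so all four roots are real and simple.

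The decisive step --- and the one that explains why the statement carries no additional equation-of-state condition, in contrast with Theorems~\ref{Theorem2} and~\ref{Theorem3} --- is Assumption~\ref{Assumption_T2}. Because the ideal-gas equation of state pins $c_1 = 1/\sqrt{3}$ independent of every fluid variable, the formula $r_{1\pm}\cdot\nabla_\psi \lambda_{1\pm} = (1-c_1^2)/[u^0(u^0 \pm u c_1)^2]$ reduces at $\psi^*$ to $(2/3)/[u^0(u^0 \pm u/\sqrt{3})^2]$, which is strictly positive. Thus each acoustic mode $\lambda_{1\pm}$ is genuinely nonlinear at $\psi^*$ unconditionally, and Assumption~\ref{Assumption_T2} is met by taking $p$ to correspond to either of them. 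Invoking Theorem~\ref{Theorem1} then yields smooth initial data whose $C^1$ solution blows up in $\partial_x\psi$ in finite time while $\psi$ itself remains bounded.

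I do not expect a real obstacle here: the eigenvalue and eigenvector arithmetic, together with the scalars $r_i\cdot\nabla_\psi \lambda_i$, have all been laid out prior to the statement, and the one remaining check --- genuine nonlinearity --- is automatic from conformal invariance. Because Theorem~\ref{Theorem1} demands only a single genuinely nonlinear mode, the thermal modes $\lambda_{2\pm}^*$ need not be analyzed at all; the proof would only become delicate if one wanted the sharper conclusion that the blow-up propagates along a specific thermal characteristic, which would require computing $r_{2\pm}\cdot\nabla_\psi \lambda_{2\pm}$ at $\psi^*$ and imposing a condition on the transport coefficients $\kappa$ and $\tau_q$ through $c_2^*$.
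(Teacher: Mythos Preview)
Your proposal is correct and follows essentially the same route as the paper: verify Assumptions~\ref{Assumption_T1} and~\ref{Assumption_T3} from Assumption~\ref{AD1} at $q^*=0$, then observe that $r_{1\pm}^*\cdot\nabla_\psi\lambda_{1\pm}^* = (1-c_1^2)/[u^0(u^0\pm uc_1)^2]>0$ unconditionally, so Assumption~\ref{Assumption_T2} holds and Theorem~\ref{Theorem1} applies. The paper also records $r_{2\pm}^*\cdot\nabla_\psi\lambda_{2\pm}^*$ in passing, but, as you note, this is not needed for the argument.
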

\begin{proof}
Note that Assumptions \ref{Assumption_T1} and \ref{Assumption_T3} are automatically satisfied under Assumption \ref{AD1} for $q^*=0$. Moreover,
the nonlinearity conditions become
\[
r_1^{*\pm} \cdot \nabla_\psi \lambda_1^{*\pm} = \frac{1-c_1^2}{u^0(u^0 \pm u c_1)^2}, \quad
r_2^{*\pm} \cdot \nabla_\psi \lambda_2^{*\pm} = \frac{1}{(u^0 \pm u c^*_2)^2}\frac{\partial c^*_2}{\partial T},
\]
which guarantees at least two genuinely nonlinear modes, in accordance with Assumption \ref{Assumption_T2}. The result follows from Theorem \ref{Theorem1}.
\end{proof}
\begin{remark}
The system exhibits four potential nonlinear modes when $\partial c_2^*/\partial T \neq 0$, which can lead to gradient blow-ups propagating with four distinct velocities $\lambda_1^\pm$ and $\lambda_2^\pm$. If $c_2^*$ does not depend on $T$, only two nonlinear modes ($\lambda_1^\pm$) remain.
\end{remark}

\begin{remark}
Finite-time gradient blow-ups are not limited to this specific choice of equation of state. However, for a general equation of state, the characteristic polynomial becomes prohibitively complex. While causality analysis remains possible, it is rendered intricate by this complexity. Moreover, the form of the characteristic equations precludes analytical expressions for the eigenvalues of $a(\psi)$. Without these eigenvalues, we cannot determine the conditions for the existence of genuinely nonlinear modes.   
\end{remark}

\section{Shock Formation in Bulk Viscous Equations - Numerical Study}
\label{Numerical_Bulk}

We numerically investigate shock formation in the Israel-Stewart equations with bulk viscosity. We have selected this regime for our numerical demonstration; the qualitative shock-forming behavior is expected to be similar in the shear and diffusion cases due to the shared mechanism of genuine nonlinearity. We study two distinct initial configurations.
\begin{itemize}
\item \textbf{Case 1}: Generates two gradient blow-ups, producing shock waves propagating in opposite directions. 
\item \textbf{Case 2}: Eliminates one nonlinear mode's blow-up (yielding a rarefaction wave) while preserving the other's shock formation.
\end{itemize}
For both cases, we adopt the following parametrization:
\begin{align}
\label{Choices_Bulk}
P_e = \frac{1}{3}, \quad \frac{\zeta}{\tau_\Pi(\varepsilon+P)} = \frac{1}{12}, \quad \text{and} \quad \zeta = (\varepsilon+P)^{3/4}.
\end{align}
This choice yields constant $P = \varepsilon/3$ and $c_\Pi^* = \sqrt{5/12} \approx 0.65$, while $c_\Pi$ remains variable.
\begin{remark}
While these parameters were chosen for simplicity, we verified that similar shock/rarefaction structures emerge for different equations of state, various $\zeta$ and $\tau_\Pi$ profiles (including constant values), and a range of initial conditions.
\end{remark}
In order to weight the magnitude of the bulk viscosity being used, let us take the case of an ultrarelativistic gas with $\varepsilon = \kappa T^4$, where the entropy density $s = S/V$ gives the viscosity ratio
\[\frac{\zeta}{s} = \left(\frac{3}{4\kappa}\right)^{1/4}.\]
For example, typical values for the quark-gluon plasma, where actual lattice results \cite{Borsanyi:2010cj} for $\varepsilon/T$ approach the Stefan-Boltzmann limit $\kappa=(\pi^2/30)g_{\text{QGP}}$ ($g_{\text{QGP}}$ counting the number of degrees of freedom of gluons and quarks),  are $\kappa\approx 12.2\sim 15.6$, what gives $\zeta/s$ in the range $0.47 \sim 0.50 $. These values demonstrate that shocks form even with substantial dissipation ($\zeta/s$).

\subsection*{Characteristic Mode Decomposition}
Using normalized eigenvectors ($l_i r_j = \delta_{ij}$, $|l_i| = 1$)
\begin{align*}
r_1 &= \frac{N_1}{c_\Pi^2\rho_\Pi\tau_\Pi}\bbm -1 \\ 0 \\ P_e \ebm, \quad 
r_{\pm} = \frac{N_2}{2c_\Pi^2\rho_\Pi\tau_\Pi}\bbm \pm\rho_\Pi\tau_\Pi \\ u^0 c_\Pi \tau_\Pi \\ \pm \zeta \ebm, \\
l_1^T &= \frac{1}{N_1}\bbm -\zeta \\ 0 \\ \rho_\Pi\tau_\Pi \ebm, \quad 
l_\pm^T = \frac{1}{N_2}\bbm \pm P_e \\ \frac{\rho_\Pi c_\Pi}{u^0} \\ \pm 1 \ebm,
\end{align*}
where $N_1 = \sqrt{\zeta^2+\rho_\Pi^2\tau_\Pi^2}$ and $N_2 = \sqrt{P_e^2+1+\rho_\Pi^2 c_\Pi^2/u_0^2}$, the characteristic modes evolve as [see \eqref{omega_i}]
\bml
\begin{align}
\omega_1 &= \frac{1}{N_1}\left(-\zeta \partial_x \varepsilon + \frac{\rho_\Pi \tau_\Pi}{u^0}\partial_x \Pi\right), \label{omega1} \\
\omega_\pm &= \frac{1}{N_2}\left(\pm P_e\partial_x \varepsilon + \frac{\rho_\Pi c_\Pi}{u^0}\partial_x u \pm \partial_x\Pi\right). \label{omega_pm}
\end{align}
\eml
From Eq.\ \eqref{omega1}, we expect that the linear mode $\omega_1$ becomes more visible when initial profiles $\overset{\circ}{\varepsilon}(x)$ or $\overset{\circ}{\Pi}(x)$ are non-constant, though it never dominates the dynamics. For initially constant $\varepsilon$ and $\Pi$, $\omega_1$ may still emerge through mode coupling during evolution, but remains subdominant. 

All numerical simulations were performed using two high-resolution shock-capturing schemes: MUSCL (Monotonic Upstream-centered Scheme for Conservation Laws) and
WENO-Z (Weighted Essentially Non-Oscillatory, Z-version), both specifically designed for accurate shock wave simulations. Complete implementation details are provided in Section~\ref{Numerical_Scheme}. The numerical code supporting this study is available in Ref. \cite{Bemfica_code}.

\subsection{Oppositely-Propagating Shock Formation (Case 1)}
\label{subsec:case1}

We initialize a smooth velocity gradient that evolves into two counter-propagating shock waves
\begin{align}
\label{Data_Bulk}
\overset{\circ}{\varepsilon}(x) = 0.5\,\text{GeV}^4, \quad
\overset{\circ}{u}(x) = 0.5-0.5\tanh{5x}, \quad\text{and}\quad\overset{\circ}{\Pi}(x) = 0.
\end{align}
This initial condition features uniform energy density, a smooth velocity transition from $1$ to $0$ (creating opposing flows since $\lambda_-<0$ and $\lambda_+>0$), and zero initial bulk viscosity.

The system's evolution demonstrates (see Fig.\ \ref{Fig1}) gradient steepening at two distinct points due to nonlinear wave modes, corresponding to a left-moving shock (I) and a right-moving shock (III). Along the evolution, preservation of strict hyperbolicity and causality ($\varepsilon+P+\Pi > 0$ and $c_\Pi\in(0,1)$ throughout) was observed.
\begin{figure}[htb]
\centering
\includegraphics[width=8cm]{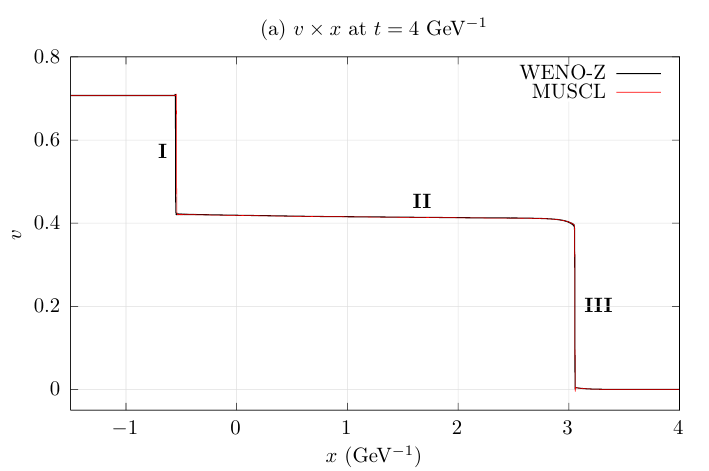}
\includegraphics[width=8cm]{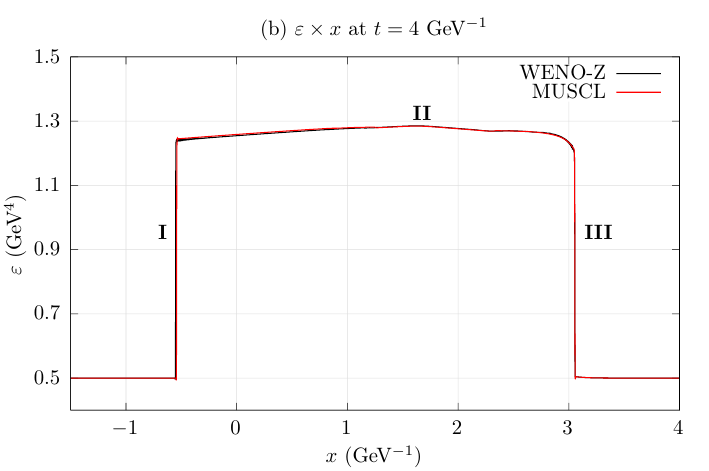}\\
\includegraphics[width=8cm]{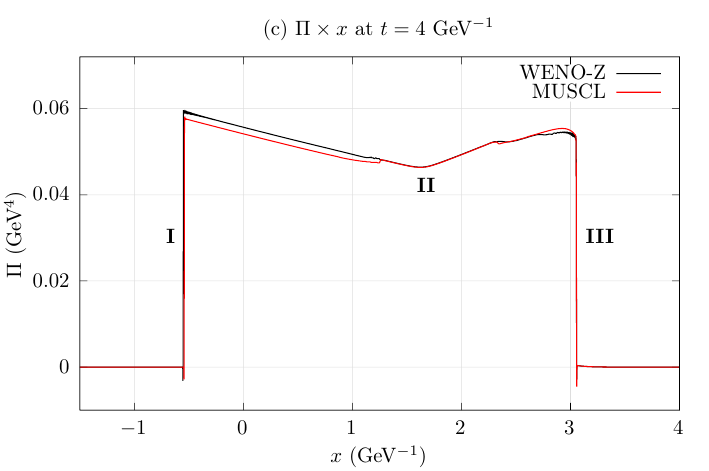}
\includegraphics[width=8cm]{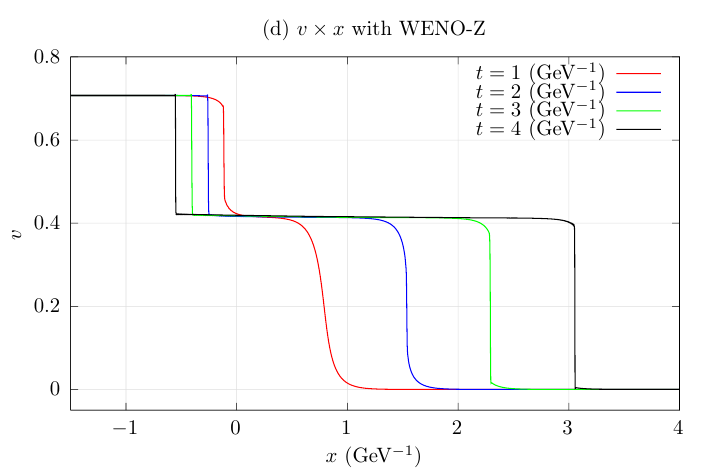}
\caption{\label{Fig1} Time evolution at $t=4\,\text{GeV}^{-1}$ showing profiles of (a) velocity $v$, (b) energy density $\varepsilon$, and (c) bulk viscosity $\Pi$ with WENO-Z and MUSCL numerical schemes for comparison. We used $\Delta x=1/400$. Figure (d) contains the velocity profile at $t=1,2,3,4\,\text{GeV}^{-1}$, solved with WENO-Z. Mode features are labeled: (I) shock ($\lambda_-$ mode), (II) linear wave, (III) shock ($\lambda_+$ mode).}
\end{figure}

For the nonlinear modes $\lambda_\pm$, we find
\bml
\label{gamma}
\begin{align}
\gamma_{+++}(\psi^*) &= -\frac{N^*_2(1-c^*_\Pi{}^2)}{2c_\Pi\rho_\Pi^*(u^0+u c^*_\pi)^2}, \\
\gamma_{---}(\psi^*) &= -\frac{N^*_2(1-c^*_\Pi{}^2)}{2c_\Pi\rho_\Pi^*(u^0-u c^*_\pi)^2},
\end{align}
\eml
both negative. The initial mode amplitudes $\overset{\circ}{\omega}^*_+=\overset{\circ}{\omega}^*_- = \overset{\circ}{l}^*_\pm \partial_x\overset{\circ}{\psi^*} = -5\rho_\Pi^*c_\Pi^*\sech^2(5x)/2u^0N^*_2$ are also negative, with $x_0$ being the minimum point. Consequently,
\begin{align*}
\overset{\circ}{\gamma}^*_{---}(x_0)\overset{\circ}{\omega}^*_-(x_0) > \overset{\circ}{\gamma}^*_{+++}(x_0)\overset{\circ}{\omega}^*_+(x_0) > 0.
\end{align*}
In accordance with the estimates in Sec.\ \ref{Sec:Estimates}, this predicts earlier blow-up in the nonlinear mode $\lambda_-$, as can be verified in Fig.\ \ref{Fig1} (d), which contains the evolution of the velocity profile, with the inset highlighting the gradient blow-ups. Figures~\ref{Fig1} (a), (b), and (c) show respectively, at $t=4\,\text{GeV}^{-1}$, the velocity profile $v$ with two shock waves, the energy density profile $\varepsilon$ showing compression jumps, and the bulk viscosity $\Pi$ developing non-zero values, also jumping to positive values when entering a shock. The modes propagate with characteristic velocities $\lambda_- < \lambda_1 < \lambda_+$, labeled respectively as (I) left-moving shock ($\lambda_-$ mode), (II) linear wave ($\lambda_1$ mode), and (III) right-moving shock ($\lambda_+$ mode). From the point of view of the shock rest frame, the upstream fronts are the left of (I) and the right of (III).

\subsection*{Shock Analysis}
\begin{description}
\item[Shock (I)] - Left-moving (computed at $t=2\,\text{GeV}^{-1}$)
\begin{itemize}
\item Shock velocity in the lab frame: $v_I \approx -0.15$.
\item Upstream/Downstream velocities in the lab frame: $v_u \approx 0.71$, $v_d \approx 0.42$.
\item Characteristic velocity at the upstream/downstream: $c_\Pi^u\approx 0.65$, $c_\Pi^d\approx 0.64$.
\item Shock frame velocities: $v'_u \approx 0.77$, $v'_d \approx 0.53$.
\item Mach numbers: $\mathcal{M}_u \approx 1.44$, $\mathcal{M}_d \approx 0.75$.
\item RH conditions satisfied with $\left\llbracket T^{\prime 0\mu}\right \rrbracket/T^{\prime 0\mu}_u <0.1\%$ for $\mu=0,1$, $\left \llbracket T^{\prime 11}\right \rrbracket \approx -5.6\times 10^{-4}$, and $\left \llbracket T^{\prime 01}\right \rrbracket \approx -8.2\times 10^{-4}$.
\end{itemize}

\item[Shock (III)] - Right-moving (computed at $t=4\,\text{GeV}^{-1}$)
\begin{itemize}
\item Shock velocity in the lab frame: $v_{III} \approx 0.77$.
\item Upstream: $v_u \approx 0$; Downstream: $v_d \approx 0.40$.
\item Characteristic velocity at the upstream/downstream: $c_\Pi^u\approx 0.66$, $c_\Pi^d\approx 0.64$.
\item Shock frame velocities: $v'_u \approx -0.77$, $v'_d \approx -0.53$.
\item Mach numbers: $\mathcal{M}_u \approx 1.41$, $\mathcal{M}_d \approx 0.75$.
\item RH conditions satisfied with $\left\llbracket T^{\prime 0\mu}\right \rrbracket/T^{\prime 0\mu}_u <0.2\%$ for $\mu=0,1$, $\left \llbracket T^{\prime 11}\right \rrbracket \approx -2.1\times 10^{-3}$, and $\left \llbracket T^{\prime 01}\right \rrbracket \approx 2.2\times 10^{-3}$.
\end{itemize}
\end{description}
Both shocks exhibit supersonic upstream flow ($\mathcal{M} > 1$), with characteristic velocity crossing and entropy density increase across shocks what, together, guarantee irreversible, physical shocks according to the Lax entropy conditions \cite{Lax1964}. All Mach numbers are computed in the shock frame (shock rest frame), as prescribed in \cite{landau1987fluid}. They correspond to the ratio $\mathcal{M}=v\gamma_v/c_\Pi\gamma_{c_\pi}$, where $\gamma_v$ and $\gamma_{c_\Pi}$ are the Lorentz factor associated with the velocities $v$ and $c_\Pi$, respectively \cite{RezzollaBook}. 

\begin{remark}
In all cases, the shock speed $v_s$ in the lab frame was estimated numerically by tracking the largest gradient in the energy density. This estimate was subsequently refined using the Rankine-Hugoniot conditions in the lab frame \cite{RezzollaBook}, which consist of the identities  
\[\left\llbracket T^{\mu\nu} \right\rrbracket n_\nu = \gamma_{v_s} \left( \left\llbracket T^{\mu 1} \right\rrbracket - v_s \left\llbracket T^{\mu 0} \right\rrbracket \right) = 0,\]  
where \(\gamma_{v_s} = 1/\sqrt{1 - v_s^2}\) is the Lorentz factor, and $n_\nu = \gamma_{v_s} (-v_s, 1, 0, 0)$ is the normal vector to the shock front, pointing in the direction of shock propagation. Here, $\left\llbracket T^{\mu\nu} \right\rrbracket = T^{\mu\nu}_d - T^{\mu\nu}_u$ denotes the jump in the stress-energy tensor across the shock. The upstream and downstream points were considered as five or more points to the right and to the left of the greatest energy gradient, symmetrically chosen.

By setting $\mu = 0$ and $\mu = 1$, the shock speed $v_s$ was determined with a precision of up to five digits when comparing the two conditions. This high level of agreement indicates that the RH conditions are in fact satisfied. 
\end{remark}

\subsection{Rarefaction-Shock Wave Configuration (Case 2)}
\label{subsec:case2}

Consider the initial data
\begin{align}
\label{Data_Bulk2}
\overset{\circ}{\varepsilon}(x) = 2.5 - 2\tanh(5x), \quad \overset{\circ}{u}(x) = 0, \quad\text{and}\quad
\overset{\circ}{\Pi}(x) = 0.
\end{align}
\begin{figure}[htb]
\centering
\includegraphics[width=8cm]{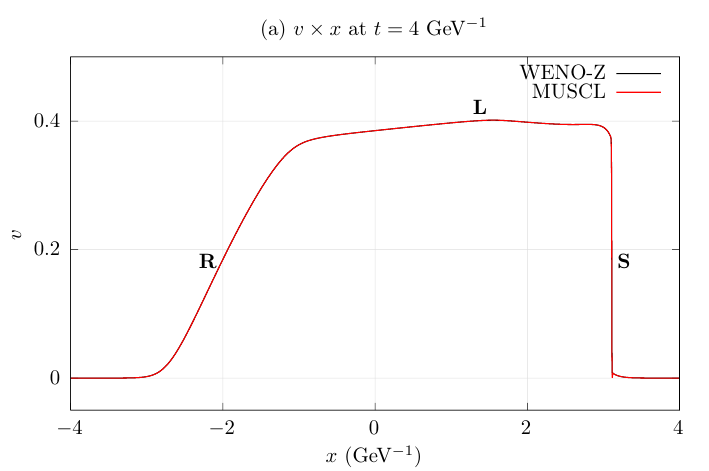}
\includegraphics[width=8cm]{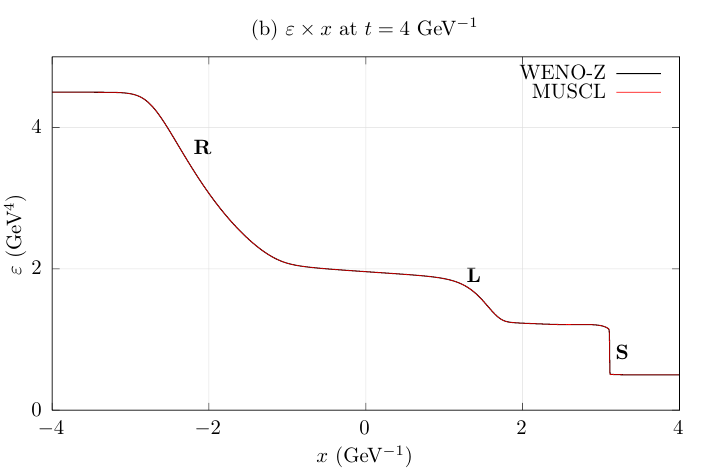}\\
\includegraphics[width=8cm]{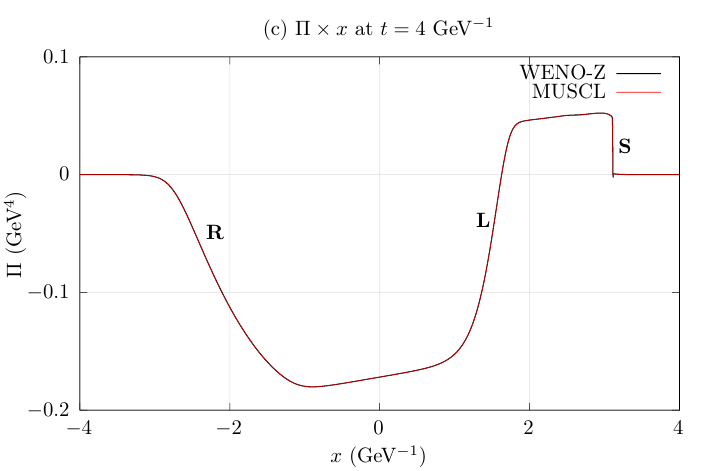}
\includegraphics[width=8cm]{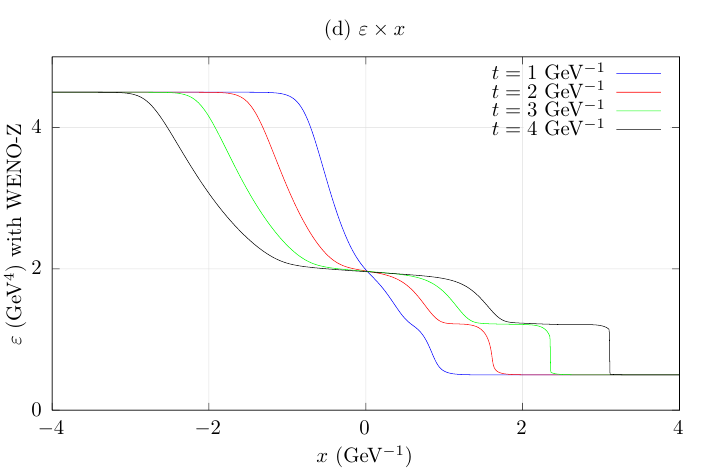}
\caption{\label{Fig2} Evolution at $t=4\,\text{GeV}^{-1}$ showing: (a) velocity $v$, (b) energy density $\varepsilon$, (c) bulk viscosity $\Pi$ (WENO-Z and MUSCL), and (d) $\varepsilon$ evolution at $t=1,2,3,4\,\text{GeV}^{-1}$ (WENO-Z). We used grid size $\Delta x =1/400$. Features labeled: (S) shock, (R) rarefaction, (L) linear wave.}
\end{figure}
The system maintains $\varepsilon+P+\Pi>0$ and $c_\Pi\in(0,1)$ (strict hyperbolicity) along the evolution. From \eqref{gamma}, both factors $\gamma_{+++}(\psi^*)$ and $\gamma_{---}(\psi^*)$ are negative for all $(x,t)$ since $c^*_\Pi<1$, while the initial mode amplitudes are $\overset{\circ}{\omega}^*_\pm = \mp 10P_e\sech^2(5x)/N^*_2$. According to the estimates in Sec.\ \ref{Sec:Estimates}, we expect no $\lambda_-$ blow-up since $\gamma_{---}^*(t,x)\overset{\circ}{\omega}^*_-(x_0)<0$ (forms rarefaction wave), while we expect a blow-up for $\lambda_+$ because $\gamma_{+++}^*(x,t)\overset{\circ}{\omega}^*_+(x_0) > 0$ (forms shock). The linear mode $\lambda_1$ creates intermediate behavior, while being more visible when compared to the case 1 (where $\omega_1=0$ initially) since it is not initially zero.

Figs. \ref{Fig2} (a)-(c) show $v$, $\varepsilon$, and $\Pi$ profiles at $t=4\,\text{GeV}^{-1}$. As can be seen, there is a shock wave (S) from $\lambda_+$ mode (right-propagating), a smooth rarefaction wave (R) from $\lambda_-$ mode (left-propagating), and a linear wave (L) from $\lambda_1$ mode. Fig.\ \ref{Fig2} (d) displays the time evolution of $\varepsilon$.
\subsection*{Shock Analysis}
\begin{itemize}
\item Shock velocity in the lab frame: $v_s \approx 0.76$. 
\item Upstream/Downstream velocities: $v_u \approx 0$, $v_d \approx 0.39$.
\item Characteristic velocity at the upstream/downstream: $c_\Pi^u\approx 0.65$, $c_\Pi^d\approx 0.64$.
\item Velocities in the shock frame: $v'_u \approx -0.76$, $v'_d \approx -0.54$.
\item Mach numbers in the lab frame: $\mathcal{M}_u \approx 1.37$, $\mathcal{M}_d \approx 0.76$.
\item RH conditions satisfied with $\left\llbracket T^{\prime 0\mu}\right \rrbracket/T^{\prime 0\mu}_u<1.2\times 10^{-2}\%$ for $\mu=0,1$, $\left\llbracket T^{\prime 11}\right\rrbracket \approx 1.3\times10^{-4}$, and $\left\llbracket T^{\prime 01}\right\rrbracket \approx -1.4\times10^{-4}$.
\end{itemize}

\begin{remark}
The numerical simulations reveal a specific regime of validity for shock formation within the Israel-Stewart framework. Successful shock formation occurs when the initial bulk viscous pressure is bounded ($|\Pi|/\tau_\Pi \lesssim 1$ in our units), ensuring the conditions for hyperbolicity and causality ($\varepsilon+P+\Pi > 0$, $c_\Pi \in (0,1)$) are maintained throughout the evolution. When $|\Pi|$ becomes too large, these fundamental conditions are violated, leading to a breakdown of the theory's mathematical well-posedness. This indicates that the bounds on $\Pi$ where violations occur are not merely a limitation of our initial data but represent the maximum departure from equilibrium that the causal Israel-Stewart theory can physically describe while maintaining its foundational properties. Consequently, there is a maximum shock strength that can be captured within this model's domain of validity.
\end{remark}

\section{Conclusions}
\label{Sec:Conclusion}

This work establishes several key theoretical and numerical results for relativistic viscous fluids. We have demonstrated, for the first time, the existence of smooth initial data that leads to finite-time gradient blow-ups in Israel-Stewart theories in $1+1$ dimensions. This occurs in three distinct regimes: pure bulk viscosity, shear viscosity, and diffusion. Through numerical simulation of the pure bulk case (without number current), we verified that the formed shocks satisfy the Rankine-Hugoniot conditions,
the characteristic velocity crossing occurs (with Mach numbers $\mathcal{M}_u > 1 > \mathcal{M}_d$), and that the upstream/downstream transitions are thermodynamically consistent. Our results complement the seminal steady-state analyses of Olson-Hiscock \cite{OlsonHiscock1990} (viscosity-dominated late-time solutions) and Geroch-Lindblom \cite{GEROCH1991394} (causality/stability constraints) by focusing on the previously unexplored \textit{early-time dynamical regime} where nonlinear effects dominate viscous damping. While Olson-Hiscock's steady-state analysis precludes shocks by construction (via time-independent solutions), our work reveals how shocks dynamically emerge from smooth data before viscosity dominates. In this context, it would be interesting to investigate whether shocks are formed in realistic hydrodynamic simulations of the quark-gluon plasma formed in heavy-ion collisions.

This analysis has been conducted with separate treatment of viscous corrections (bulk, shear, and diffusion). While future work could extend this to combined viscous effects and then to a full $3+1$ dimensional system, our present focus was to rigorously demonstrate finite-time gradient blow-ups in IS theories. These results complement the important steady-state analyses of \cite{OlsonHiscock1990} and \cite{GEROCH1991394} by revealing the early-time dynamics where nonlinear effects dominate. Their viscosity-dominated, late-time solutions --- where steady-state approximations become valid and shocks are prohibited --- emerge naturally from our dynamical framework after sufficient viscous damping occurs. The key insight is that shock formation requires analyzing the full time-dependent equations before viscosity smooths out the gradients.

In the context of relativistic heavy-ion collisions, our work suggests that shock formation could, in principle, occur in the very early pre-equilibrium stages where viscous damping is not yet dominant. However, the highly longitudinal expansion and complex 3D geometry of a heavy-ion collision may inhibit the development of the strong compression waves necessary for blow-up. The conditions we identify --- specifically, initial data with sufficiently large gradients where dissipative source terms are initially small --- are more likely to be met in regions of high initial energy density variance or in the collision of small, dense systems. Therefore, while our results establish the mathematical possibility of shocks in IS theory, their observational signature in experiments remains an open question that warrants further investigation with more realistic 3+1D simulations.

While this work establishes the phenomenon of gradient blow-up in simplified, separate regimes, extending this analysis to the full Israel-Stewart theory, where bulk, shear, and diffusion act concurrently, taking into account a more general equation of state, presents significant new challenges. A collective treatment leads to a larger system of equations with a more complex characteristic structure. The characteristic polynomial is of a higher degree, often precluding closed-form expressions for the characteristic velocities and making the identification of genuinely nonlinear modes analytically more involved. The coupling between the different viscous sectors would also complicate the evolution of characteristics and the construction of initial data guaranteed to produce a blow-up. Our results in these simpler, illustrative regimes are crucial first steps. They demonstrate unequivocally that the nonlinear hyperbolic nature of Israel-Stewart theory can dominate its dissipative character at early times, leading to shock formation --- a key physical insight that was previously missing. The methodologies developed here provide a foundation for future studies of the full theory.

Moreover, our 1+1 dimensional results do not guarantee equivalent behavior in higher dimensions, where new mathematical and numerical challenges arise. Mathematical challenges in proving blow-ups in $3+1$ dimensions, which requires fundamentally new techniques beyond our current framework. Numerical challenges because standard schemes face a critical trade-off: optimization for smooth solutions may artificially dampen shocks, while shock-capturing schemes may obscure the true discontinuity structure. Our work provides key insights into initial data structures that promote shock formation. The central message remains that shock formation in viscous relativistic fluids requires careful consideration of both the initial data structure and numerical approach, particularly when extending beyond the simplified cases studied here. In this context, it would be interesting to investigate shock formation in other causal and stable theories of relativistic fluid dynamics, such as the Bemfica-Disconzi-Noronha-Kovtun (BDNK) formalism \cite{Bemfica:2017wps,Bemfica:2019knx,Bemfica:2020zjp,Kovtun:2019hdm,Kovtun:2020eho}, where it has been shown in Ref. \cite{Freistuhler:2021lla} that, for a steady-state traveling wave solution (an ODE written in terms of the self-similar parameter $\xi=x-v_s t$, different from the time-independent Olson-Hiscock/Geroch-Lindblom analyses), shocks of sufficiently small amplitude admit smooth profile, while those with non-small amplitudes do not, with the special case of all shocks having smooth profile given that one of the characteristic velocities is the speed of light. Numerical simulations of BDNK theory have been reported in Refs. \cite{Pandya:2021ief,Pandya:2022pif,Bea:2021zol}, but a detailed study of shock formation, such as done here in the case of IS theory, has not yet been performed.

\section*{Acknowledgments} 

The author is grateful to Marcelo M. Disconzi for invaluable discussions and insights and to Jorge Noronha for carefully reading this work and contributing to its improvement. Special thanks are extended to Teerthal Patel for extensive discussions and critical contributions to the numerical aspects of this work. The manuscript was proofread and corrected with the assistance of an AI-based tool. 

\appendix

\section{Numerical scheme}
\label{Numerical_Scheme}

In our numerical simulations, we employed both the Monotonic Upstream-centered Scheme for Conservation Laws (MUSCL) \cite{VanLeer1979} with Total Variation Diminishing (TVD) and a new version of the Weighted Essentially Non-Oscillatory (WENO) scheme known as WENO-Z \cite{Borges2008}, a higher-order reconstruction method. These finite volume schemes are well-suited as shock-capturing methods, preserving fluxes across discontinuities and, consequently, the Rankine-Hugoniot conditions.  

To implement these schemes, we promoted the set $\{T^{00}, T^{01}, \Pi\}$ as the dynamical variables instead of $\{\varepsilon, u, \Pi\}$ using the transformation  
\begin{align*}  
\varepsilon &= \frac{-\Pi - T^{00}(1 - P_e) + \sqrt{[\Pi + T^{00}(1 - P_e)]^2 + 4 P_e \left[T^{00}(\Pi + T^{00}) - (T^{01})^2\right]}}{2 P_e}, \\  
u &= \sinh\left[\frac{1}{2} \arcsinh\left(\frac{2 T^{01}}{(1 + P_e)\varepsilon + \Pi}\right)\right],  
\end{align*}  
such that the energy-momentum conservation equations, together with the relaxation equation \eqref{Pi}, allow us to rewrite the equations of motion in the flux-conservative form  
\begin{align}  
\label{FluxConservativeEq}  
\partial_t q + \partial_x F = S,  
\end{align}  
where  
\[  
q = \begin{bmatrix} T^{00} \\ T^{01} \\ \Pi \end{bmatrix}, \quad  
F = \begin{bmatrix} T^{01} \\ -\varepsilon + \rho_\Pi u_0^2 \\ \frac{u}{u^0} \Pi \end{bmatrix}, \quad \text{and} \quad  
S = \begin{bmatrix} 0 \\ 0 \\ \frac{\Pi \partial_x u}{(u^0)^3} - \frac{\zeta C^\alpha \partial_\alpha u + \Pi}{u^0 \tau_\Pi} \end{bmatrix}.  
\]  

After discretizing the interval $[-L/2, L/2]$ into small uniform cells $I_i = [x_i - \Delta x/2, x_i + \Delta x/2]$ with grid size $\Delta x$, the integral of \eqref{FluxConservativeEq} over $I_i$ yields  
\begin{align*}  
\frac{d q_i}{dt} + \frac{1}{\Delta x} \left[F^*_{i+\frac{1}{2}} - F^*_{i-\frac{1}{2}}\right] = S_i := S(q_i),  
\end{align*}  
where the midpoint approximation for the cell average has been used, i.e.,  
\[\bar{q}_i=(1/\Delta x) \int_{I_i} q(x) \, dx:=q_i\quad \text{and}\quad\bar{S}_i=(1/\Delta x) \int_{I_i} S(q(x)) \, dx:=S_i.\] 
This is the Kurganov-Tadmor (KT) central scheme \cite{Kurganov2000}, where $F^*_{i-\frac{1}{2}}$ and $F^*_{i+\frac{1}{2}}$ are the functions $F(q)$ evaluated at the interfaces $x_i - \frac{1}{2} \Delta x$ and $x_i + \frac{1}{2} \Delta x$, respectively. The reconstruction schemes used in this paper involve two different reconstructions of the variables at the interfaces of each cell.  

\section*{WENO-Z}

WENO-Z is a fifth-order reconstruction scheme that uses a 5-point stencil to reconstruct values at each cell interface. For the cell $I_i$, it employs the stencil $S(x_i) = \{I_{i-2}, I_{i-1}, I_i, I_{i+1}, I_{i+2}\}$ centered at $x_i$ to reconstruct both $q_{i-\frac{1}{2}}^+$ and $q_{i+\frac{1}{2}}^-$. Here, $q_{i+\frac{1}{2}}^-$ represents the left state (upwind) reconstruction at $x_{i+\frac{1}{2}}$, while $q_{i-\frac{1}{2}}^+$ represents the right state (downwind) reconstruction at $x_{i-\frac{1}{2}}$, both computed from the stencil $S(x_i)$.

The left state reconstruction $q_{i+\frac{1}{2}}^-$ is obtained by dividing $S(x_i)$ into three substencils.
\begin{itemize}
    \item Left substencil: $S_1 = \{I_{i-2}, I_{i-1}, I_i\}$.
    \item Central substencil: $S_2 = \{I_{i-1}, I_i, I_{i+1}\}$.
    \item Right substencil: $S_3 = \{I_i, I_{i+1}, I_{i+2}\}$.
\end{itemize}
The reconstructed value is a convex combination of third-order interpolating polynomials
\begin{align*}
q_{i+\frac{1}{2}}^{(1)} &= \frac{1}{3}\bar{q}_{i-2} - \frac{7}{6}\bar{q}_{i-1} + \frac{11}{6}\bar{q}_i, \\
q_{i+\frac{1}{2}}^{(2)} &= -\frac{1}{6}\bar{q}_{i-1} + \frac{5}{6}\bar{q}_i + \frac{1}{3}\bar{q}_{i+1}, \\
q_{i+\frac{1}{2}}^{(3)} &= \frac{1}{3}\bar{q}_i + \frac{5}{6}\bar{q}_{i+1} - \frac{1}{6}\bar{q}_{i+2}.
\end{align*}

Using the midpoint approximation $\bar{q}_i = q_i$ for cell averages, the final reconstruction is
\begin{align*}
q_{i+\frac{1}{2}}^- = \sum_{k=1}^3 w_k q_{i+\frac{1}{2}}^{(k)},
\end{align*}
where the nonlinear weights $w_k$ are computed from the linear weights $d_k = \left(\frac{1}{10}, \frac{3}{5}, \frac{3}{10}\right)$ as
\begin{align*}
w_k = \frac{\alpha_k}{\sum_{l=1}^3 \alpha_l}, \quad
\alpha_k = d_k \left[1 + \left(\frac{\tau_5}{\epsilon + \beta_k}\right)^2\right], \quad\text{and}\quad
\tau_5 = |\beta_1 - \beta_3|.
\end{align*}
Here, $\epsilon \approx 10^{-40}\sim 10^{-6}$ (we used $10^{-40}$ for better results around discontinuities) prevents division by zero, and $\beta_k$ are the smoothness indicators
\bml
\label{Smooth_indicators}
\begin{align}
\beta_1 &= \frac{13}{12}(q_{i-2} - 2q_{i-1} + q_i)^2 + \frac{1}{4}(q_{i-2} - 4q_{i-1} + 3q_i)^2, \\
\beta_2 &= \frac{13}{12}(q_{i-1} - 2q_i + q_{i+1})^2 + \frac{1}{4}(q_{i-1} - q_{i+1})^2, \\
\beta_3 &= \frac{13}{12}(q_i - 2q_{i+1} + q_{i+2})^2 + \frac{1}{4}(3q_i - 4q_{i+1} + q_{i+2})^2.
\end{align}
\eml
The factor $\tau_5$ enhances the scheme's ability to select the smoothest solution near discontinuities while maintaining high accuracy in smooth regions.

For the right state reconstruction $q_{i-\frac{1}{2}}^+$, we use mirrored substencils (mirrored order)
\begin{align*}
S'_1 &= \{I_{i+2}, I_{i+1}, I_i\}, \\
S'_2 &= \{I_{i+1}, I_i, I_{i-1}\}, \\
S'_3 &= \{I_i, I_{i-1}, I_{i-2}\},
\end{align*}
yielding the polynomials
\begin{align*}
q_{i-\frac{1}{2}}^{(1)} &= \frac{1}{3}q_{i+2} - \frac{7}{6}q_{i+1} + \frac{11}{6}q_i, \\
q_{i-\frac{1}{2}}^{(2)} &= -\frac{1}{6}q_{i+1} + \frac{5}{6}q_i + \frac{1}{3}q_{i-1}, \\
q_{i-\frac{1}{2}}^{(3)} &= \frac{1}{3}q_i + \frac{5}{6}q_{i-1} - \frac{1}{6}q_{i-2}.
\end{align*}
The weights for the right state reconstruction reuse the smoothness indicators since $\beta'_1 = \beta_3$, $\beta'_2 = \beta_2$, and $\beta'_3 = \beta_1$, what yields
\begin{align*}
\alpha'_1 = d_1\left[1 + \left(\frac{\tau_5}{\epsilon + \beta_3}\right)^2\right], \quad
\alpha'_2 = \alpha_2, \quad\text{and}\quad
\alpha'_3 = d_3\left[1 + \left(\frac{\tau_5}{\epsilon + \beta_1}\right)^2\right].
\end{align*}

The flux reconstruction uses Lax-Friedrichs splitting \cite{Shu1997}
\begin{align*}
F^*_{i+\frac{1}{2}} = \frac{1}{2}\left[F(q^+_{i+\frac{1}{2}}) + F(q^-_{i+\frac{1}{2}}) - a_{i+\frac{1}{2}}(q^+_{i+\frac{1}{2}} - q^-_{i+\frac{1}{2}})\right],
\end{align*}
where $q^+_{i+\frac{1}{2}} = q^-_{i+\frac{3}{2}}$ is obtained from the stencil $S(x_{i+1})$. The parameter $a_{i\pm\frac{1}{2}}$ represents the maximum wave speed (typically the largest eigenvalue of $\partial F/\partial q$) and provides necessary numerical dissipation. For simplicity, we set $a_{i\pm\frac{1}{2}} = 1$, which maintains stability while improving computational efficiency.

\section*{MUSCL Scheme}

The MUSCL scheme \cite{VanLeer1974,VanLeer1977,VanLeer1979} reconstructs the fluxes at cell interfaces using a TVD approach. The interface fluxes are computed as
\begin{align*}
F^*_{i-1/2} &= \frac{1}{2}\left\{\left[F\left(q^R_{i-\frac{1}{2}}\right) + F\left(q^L_{i-\frac{1}{2}}\right)\right] - a_{i-\frac{1}{2}}\left(q^R_{i-\frac{1}{2}} - q^L_{i-\frac{1}{2}}\right)\right\}, \\
F^*_{i+1/2} &= \frac{1}{2}\left\{\left[F\left(q^R_{i+\frac{1}{2}}\right) + F\left(q^L_{i+\frac{1}{2}}\right)\right] - a_{i+\frac{1}{2}}\left(q^R_{i+\frac{1}{2}} - q^L_{i+\frac{1}{2}}\right)\right\},
\end{align*}
where $q^L$ and $q^R$ represent the left and right reconstructed states at each interface, and $a_{i\pm\frac{1}{2}}$ is the local maximum characteristic speed as aforementioned. We set it to 1 for simplicity. The interface states are reconstructed using slope-limited interpolation \cite{VanLeer1979}
\begin{align*}
q^L_{i+\frac{1}{2}} &= q_i + \frac{1}{2}\phi(r_i)(q_{i+1} - q_i), \\
q^R_{i+\frac{1}{2}} &= q_{i+1} - \frac{1}{2}\phi(r_{i+1})(q_{i+2} - q_{i+1}), \\
q^L_{i-\frac{1}{2}} &= q_{i-1} + \frac{1}{2}\phi(r_{i-1})(q_i - q_{i-1}), \\
q^R_{i-\frac{1}{2}} &= q_i - \frac{1}{2}\phi(r_i)(q_{i+1} - q_i),
\end{align*}
where the smoothness parameter $r_i$ is given by:
\begin{align*}
r_i = \frac{q_i - q_{i-1}}{q_{i+1} - q_i}.
\end{align*}
For our simulations, we employ the Sweby symmetric limiter \cite{Sweby1984}:
\begin{align*}
\phi(r) = \max\left[0, \min(\beta r, 1), \min(r, \beta)\right], \quad (1 \leq \beta \leq 2).
\end{align*}
Optimal results were obtained with $\beta = 1.5$ for both the relativistic Euler equations (tested separately) and the bulk Israel-Stewart equations. Smaller values of $\beta$ tend to excessively smooth the solutions, while larger values introduce spurious oscillations near discontinuities. The choice $\beta = 1.5$ provides a good balance between accuracy and stability for our applications.

\section*{Integrator and Numerical Differentiation}

For both schemes, we employed the second-order midpoint Runge-Kutta method \cite{Runge1895,Butcher2008} for time integration
\begin{align*}
q_i^{(s+1)} = q_i^{(s)} + \Delta t f\left(t_s + \frac{\Delta t}{2}, q_i^{(s)} + \frac{1}{2} f(t_s, q_i^{(s)})\right),
\end{align*}
where 
\[
f(t, q_i) = -\frac{1}{\Delta x}\left(F^*_{i+\frac{1}{2}} - F^*_{i-\frac{1}{2}}\right) + S(q_i).
\]

For spatial derivatives in the source terms, we used the 5th-order Central WENO (C-WENO) scheme \cite{Levy1999}, which maintains high accuracy while handling discontinuities and sharp gradients. The method utilizes the same three substencils $S_1$, $S_2$, and $S_3$ of $S(x_i)$ as before, with derivatives computed via
\begin{align*}
f'_1 &= \frac{f_{i-2} - 4f_{i-1} + 3f_i}{2\Delta x}, \\
f'_2 &= \frac{f_{i+1} - f_{i-1}}{2\Delta x}, \\
f'_3 &= \frac{-3f_i + 4f_{i+1} - f_{i+2}}{2\Delta x}.
\end{align*}
The final derivative combines these second order approximations using the smoothness indicators $\beta_k$ from \eqref{Smooth_indicators} as
\begin{align*}
f'(x_i) = \sum_{k=1}^3 w_k f'_k + \mathcal{O}(\Delta x^5),
\end{align*}
where order five is expected around smooth regions, with non-linear weights
\begin{align*}
w_k = \frac{\alpha_k}{\sum_{l=1}^3 \alpha_l} \quad \text{and}\quad
\alpha_k = \frac{c_k}{(\epsilon + \beta_k)^2},
\end{align*}
where $c_k = \left(\frac{1}{6}, \frac{2}{3}, \frac{1}{6}\right)$ are the optimal linear weights.

For the time derivative of $u$, we avoided numerical differentiation (particularly problematic at $t=0$) by instead solving
\[
\partial_t \Psi = (A^0)^{-1}\left(-A^1 \partial_x \Psi + B\right),
\]
yielding the explicit expression
\[
\partial_t u = -\frac{u^0 \left(P_e\tau_\Pi \partial_x\varepsilon + \tau_\Pi \partial_x\Pi - u \left\{\partial_x u \left[(P_e-1)\rho_\Pi\tau_\Pi + \zeta\right] + \Pi\right\}\right)}{u_0^2 \rho_\Pi \tau_\Pi - u^2 (P_e \rho_\Pi \tau_\Pi + \zeta)}.
\]

\section*{Boundary conditions}

The WENO-Z scheme requires 3 boundary points on each side, while MUSCL requires 2 boundary points per side. For a simulation domain of size $L$ discretized into $N$ grid points, the uniform grid spacing is given by $\Delta x = L/(N-1)$. The grid points are labeled as $i = 1, 2, \dots, N$, centered at positions $x_i$. The boundary regions consist of
the left boundary $x_1, \dots, x_{N_b}$ and the right boundary $x_{N-N_b+1}, \dots, x_N$, where $N_b = 3$ for WENO-Z and $N_b = 2$ for MUSCL. 

Unless otherwise specified, we implemented the following boundary conditions:
\begin{align*}
q_i &= q_{N_b+1} \quad \text{for } i = 1, \dots, N_b \quad \text{(left boundary)} \\
q_{N-N_b+i} &= q_{N-N_b} \quad \text{for } i = 1, \dots, N_b \quad \text{(right boundary)}
\end{align*}

\section*{Timestep}

We employed a Courant-Friedrichs-Lewy (CFL) condition \cite{courant1928original,courant1967translated} of $1/10$, setting $\Delta t = \Delta x/10$. Various grid spacings were tested, ranging from $\Delta x = 1/20$ downto $1/1000$, with the solutions showing consistent profiles across all resolutions. The primary differences observed were minor numerical artifacts that decreased with finer meshes. For computing the Rankine-Hugoniot conditions and Mach numbers, we used $\Delta x = 1/400$ to properly resolve the large gradients. While coarser grids produced similar results for these quantities, the finer resolution ensured better accuracy in gradient calculations.

\section*{Convergence Factor}

The convergence test followed the formula \cite{RezzollaBook}
\begin{align}
\label{p_converg}
\tilde{p}_a = \log_\gamma\left(\frac{E_a(h,h/\gamma)}{E_a(h/\gamma,h/\gamma^2)}\right),
\end{align}
where we compared solutions at successive grid spacings $h$, $h/\gamma$, and $h/\gamma^2$ (we used $\gamma = 2$). The average error between solutions was computed as
\[
E_a(h,k) = ||q_{a,i}^{(h)} - q_{a,i}^{(k)}||_1 \equiv \frac{1}{N}\sum_{i=1}^N \left|q_{a,i}^{(h)} - q_{a,i}^{(k)}\right|,
\]
where $q_{a,i}^{(h)}$ and $q_{a,i}^{(k)}$ represent the solution for variable $q_a$ at position $x_i$ and time $t$, computed with grid spacings $h$ and $k$ respectively. Assuming spatial independence of $\tilde{p}_a$, the error scales as $q_{a,i}^{(h)} - q_{a,i}^{(k)} = C k^{\tilde{p}_a} - C h^{\tilde{p}_a}$, where $C$ is a proportionality constant, leading to the convergence factor in \eqref{p_converg}. The convergence factor \(\tilde{p}_a \approx 1\) at discontinuities is expected for shock-capturing schemes and is a property of the global \(L_1\) norm. However, the fluid variables in the smooth upstream and downstream regions, from which the Rankine-Hugoniot conditions are calculated, are themselves much better resolved, contributing to the accuracy of the shock analysis.

From case 1 in Sec.\ \ref{subsec:case1}, Fig.~\ref{Fig_conv2} (a) and (b) show the convergence factors $\tilde{p}_a$ for WENO-Z scheme (higher in smooth regions) and MUSCL scheme, respectively. Both schemes maintain $\tilde{p}_a \approx 1$ around discontinuities, as expected for shock-capturing methods, show oscillatory behavior (physical behavior), and produce comparable results despite different approaches.
\begin{figure}[htb]
\centering
\includegraphics[width=8cm]{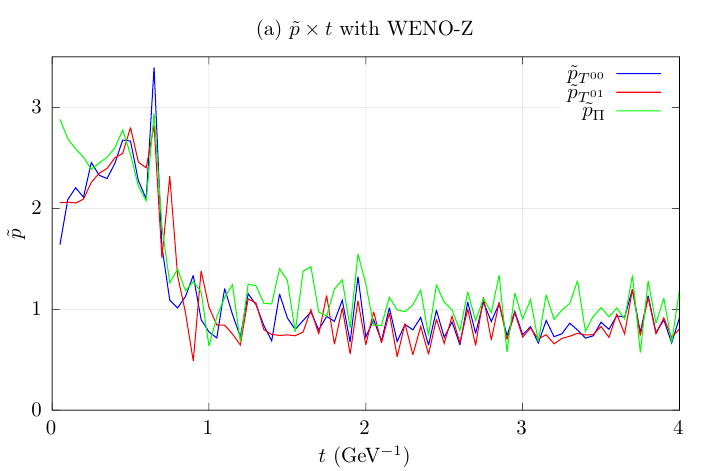}
\includegraphics[width=8cm]{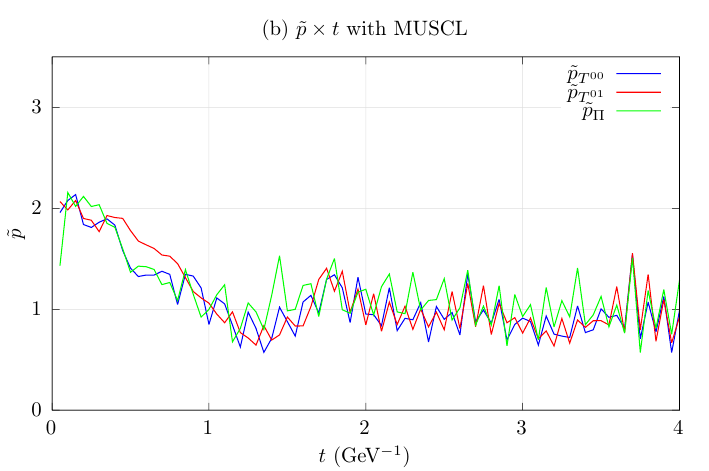}
\caption{\label{Fig_conv2} Convergence factors $\tilde{p}_a$ for (a) WENO-Z and (b) MUSCL schemes, showing expected $\tilde{p}_a \approx 1$ behavior with oscillations at discontinuities.}
\end{figure}

As for case 2 in Sec.\ \ref{subsec:case2}, both schemes maintain $\tilde{p}_a \approx 1$ at discontinuities, with oscillatory behavior, as expected (See Fig.\ \ref{Fig_conv3}). MUSCL performs better near shocks while WENO-Z excels in smooth solutions. However, the solutions are in accordance.
\begin{figure}[htb]
\centering
\includegraphics[width=8cm]{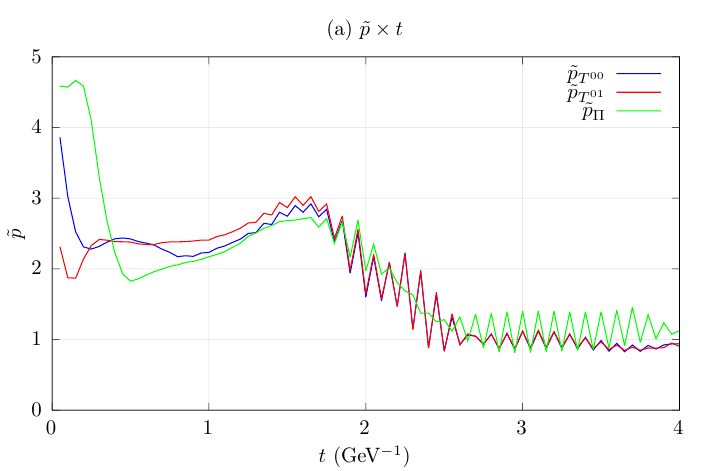}
\includegraphics[width=8cm]{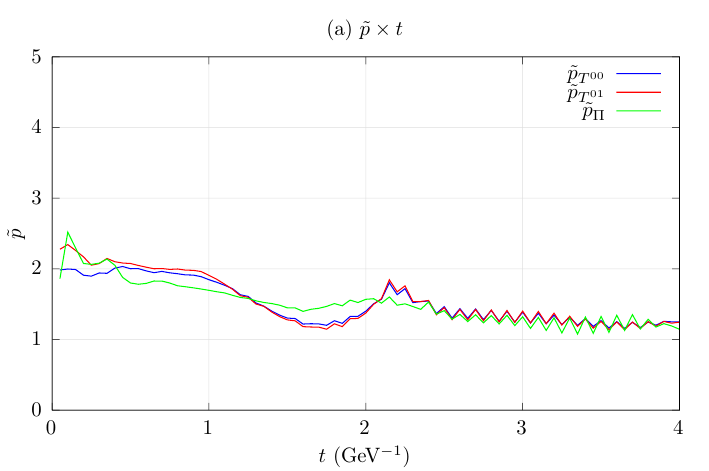}
\caption{\label{Fig_conv3} Convergence factors $\tilde{p}_a$ for (a) WENO-Z and (b) MUSCL schemes.}
\end{figure}

\section{Bjorken Flow}
\label{Bjorken}

A valuable benchmark for numerical codes in relativistic hydrodynamics is comparison with semi-analytical solutions like Bjorken flow \cite{Bjorken1983}. This solution describes a system that is invariant under Lorentz boosts along the beam axis (here taken to be the $x$-direction), meaning all physical quantities depend only on the proper time $\tau = \sqrt{t^2-x^2}$ and are independent of the spacetime rapidity $\eta_s = \frac{1}{2}\ln\left[(t+x)/(t-x)\right]$. In Milne coordinates, the Minkowski metric takes the form
\begin{align*}
ds^2 = -d\tau^2 + \tau^2 d\eta_s^2 + dy^2 + dz^2.
\end{align*}
The fluid four-velocity in these coordinates is $\tilde{u}^\mu = (1,0,0,0)$. 

Making the replacement $\partial_\mu \to \nabla_\mu$ in \eqref{Bulk_EOM}, and using $\nabla_\alpha \tilde{u}^\alpha = 1/\tau$, we obtain the simplified equations
\begin{align}
\label{Bulk_Milne}
\frac{d\varepsilon}{d\tau} &= -\frac{\varepsilon + P + \Pi}{\tau}, \\
\frac{d\Pi}{d\tau} &= -\frac{\Pi}{\tau_\Pi} - \frac{\zeta}{\tau \tau_\Pi},
\end{align}
where $\varepsilon = \varepsilon(\tau)$ and $\Pi = \Pi(\tau)$. This system of ordinary differential equations (ODEs) can be solved numerically with high precision using second-order Runge-Kutta methods. For our specific calculation, we assumed the same parametrization \eqref{Choices_Bulk}. For the initial conditions at $\tau_0 = 0.5$, we chose $\varepsilon(\tau_0) = 100\,\text{GeV}^4$ and $\Pi(\tau_0) = 0$. The solutions are shown in Fig. \ref{Fig3}.
\begin{figure}
\includegraphics[width=8cm]{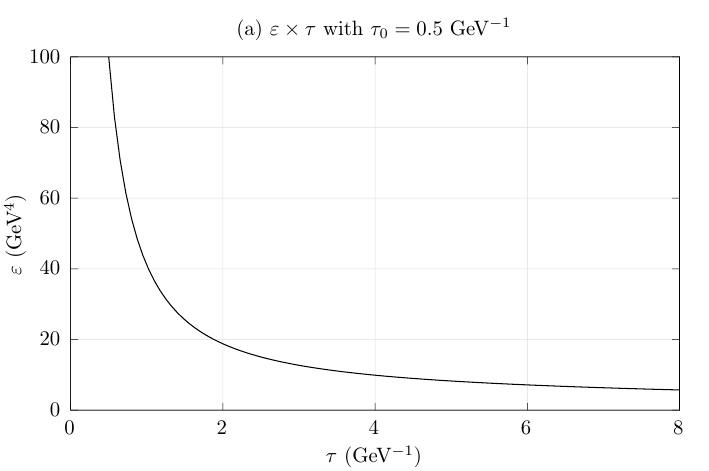}
\includegraphics[width=8cm]{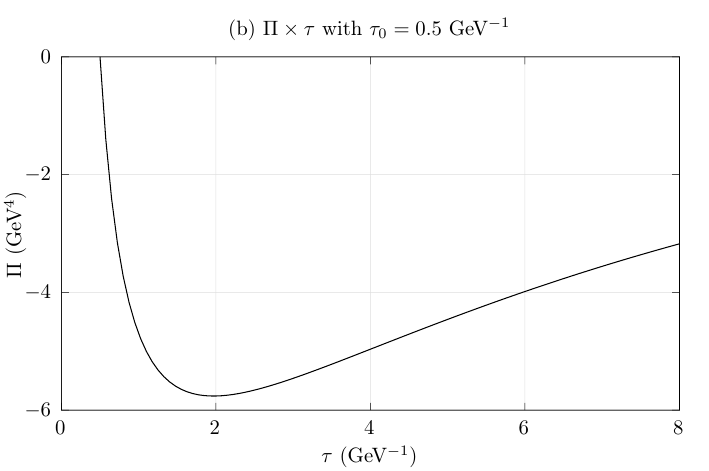}
\caption{\label{Fig3} Semi-analytical solutions of the Bjorken flow system \eqref{Bulk_Milne}: (a) Energy density $\varepsilon_{SA}$ versus $\tau$ and (b) Bulk viscous pressure $\Pi_{SA}$ versus $\tau$. Initial conditions: $\tau_0 = 0.5$, $\varepsilon(\tau_0) = 100$, $\Pi(\tau_0) = 0$. Solutions obtained using second-order Runge-Kutta with $\Delta \tau = 1/400$.}
\end{figure}

We now consider the same problem in Cartesian coordinates $(t,x)$, where the equations of motion \eqref{Bulk_EOM} apply directly. We solve \eqref{Bulk_EOM} numerically on the spatial interval $x \in [-5, 5]$. For initial conditions, we use the semi-analytical solution from Fig. \ref{Fig3} evaluated at $t_0 = \sqrt{0.5^2 + 5^2} \approx 5.02$, with energy density $\varepsilon(t_0,x) = \varepsilon_{SA}(\tau(t_0,x))$ and bulk viscous pressure $\Pi(t_0,x) = \Pi_{SA}(\tau(t_0,x))$, as shown in Fig. \ref{Fig4}(a) and (b) respectively. The flow velocity at $t=t_0$ in Cartesian coordinates is given by
\[
u^\mu(t_0,x) = \frac{t_0}{\tau_0}\left(1, \frac{x}{t_0}, 0, 0\right).
\]
\begin{figure}
\includegraphics[width=8cm]{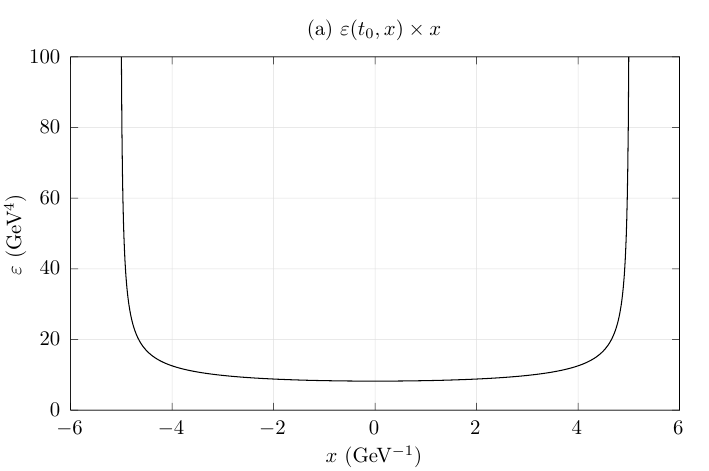}
\includegraphics[width=8cm]{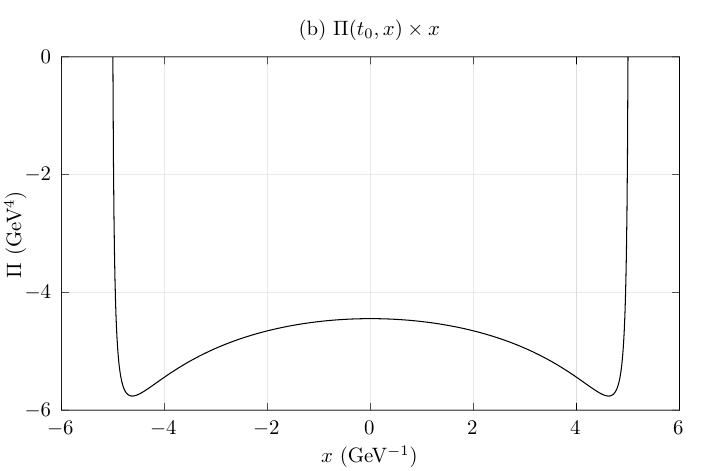}
\caption{\label{Fig4} Initial conditions at $t_0 \approx 5.02$ for (a) energy density $\varepsilon(t_0,x)$ and (b) bulk viscous pressure $\Pi(t_0,x)$, obtained from the semi-analytical Bjorken flow solution through $\tau(t_0,x) = \sqrt{t_0^2 - x^2}$.}
\end{figure}

The numerical solution of \eqref{Bulk_EOM} with these initial conditions presents special challenges. The fields $\varepsilon$ and $\Pi$ exhibit large gradients (See Fig.\ \ref{Fig4}) near the boundaries  $x_1,\ldots,x_{N_b}$ (left) and $x_{N-N_b+1},\ldots,x_N$ (right), where $N_b = 3$ for WENO-Z and $N_b = 2$ for MUSCL. Due to these steep gradients, we implemented modified boundary conditions:
\begin{enumerate}
    \item For the left boundary ($x_1,\ldots,x_{N_b}$), we used clamped cubic spline extrapolation based on the 8 adjacent points $x_{N_b+1},\ldots,x_{N_b+8}$.
    \item For the right boundary, we applied the symmetry relations $T^{00}(-x) = T^{00}(x)$, $T^{01}(-x) = -T^{01}(x)$, and $\Pi(-x) = \Pi(x)$ to mirror the left boundary solution.
\end{enumerate}
The clamped spline conditions were implemented using first-derivative constraints from the interior solution to maintain consistency with the physical behavior expected near the boundaries.

Figure \ref{Fig5} presents the numerical solution at $t = 8\,\text{GeV}^{-1}$ obtained using the WENO-Z scheme with $\Delta x = 1/400$ and $\Delta t = \Delta x/10$. Despite the large gradients near the boundaries, the numerical solution shows excellent agreement with the semi-analytical solutions $\varepsilon_{SA}(\tau(8,x))$ and $\Pi_{SA}(\tau(8,x))$. The relative error reaches machine precision throughout most of the domain, with only minor numerical oscillations visible in the profiles. As shown in Fig. \ref{Fig6}, the relative error remains below $6 \times 10^{-6}$ for the energy density $\varepsilon$ at the boundaries and below $2 \times 10^{-6}$ for the bulk viscous pressure $\Pi$ at the boundaries.
\begin{figure}[htb]
\includegraphics[width=8cm]{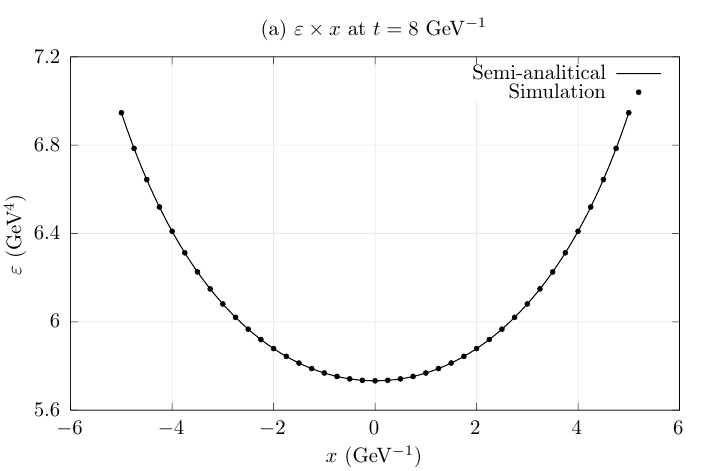}
\includegraphics[width=8cm]{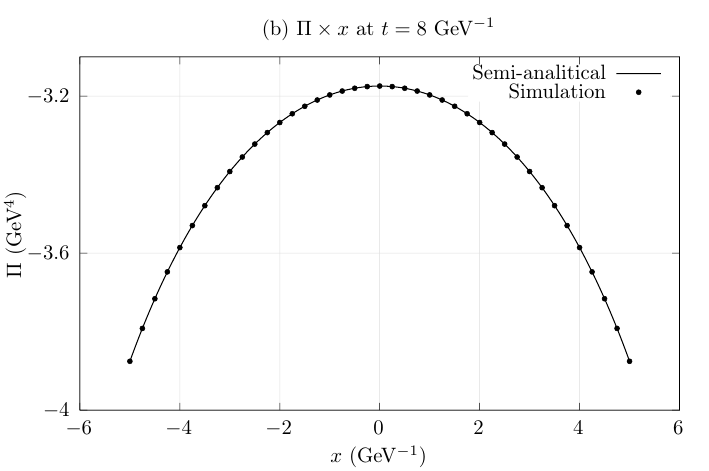}
\caption{\label{Fig5} Numerical results at $t = 8\,\text{GeV}^{-1}$: (a) Energy density $\varepsilon(x)$ and (b) bulk viscous pressure $\Pi(x)$. The semi-analytical solutions are shown as solid green curves, while the WENO-Z numerical results are displayed as data points.}
\end{figure}

\begin{figure}[htb]
\includegraphics[width=8cm]{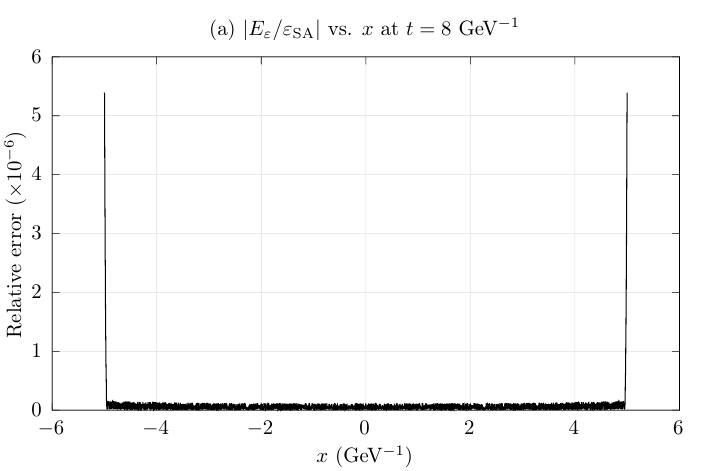}
\includegraphics[width=8cm]{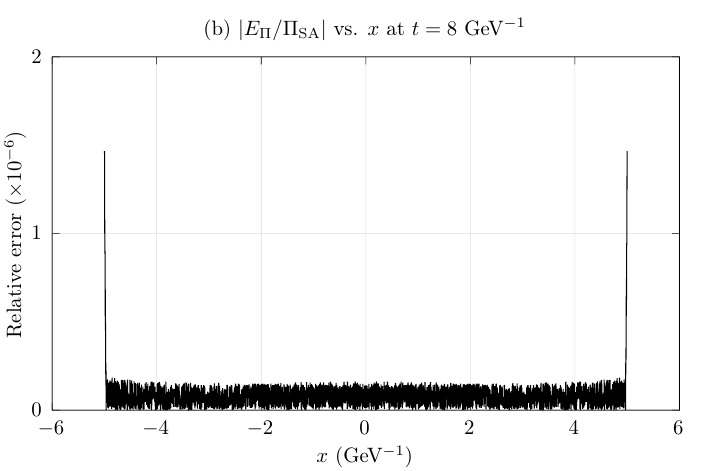}
\caption{\label{Fig6} Relative errors at $t = 8\,\text{GeV}^{-1}$ for (a) energy density and (b) bulk viscous pressure as functions of position $x$. The numerical solution achieves machine precision accuracy, with the observed residuals representing rounding errors inherent to floating-point arithmetic.} 
\end{figure}


%

\end{document}